\DeclareMathAlphabet{\mathpzc}{OT1}{pzc}{m}{it}
\let\mathpzc\mathscr
\let\mathpzc\mathcal
\def\BNF{\ \  | \ \  }
\newtheorem{theorem}{Theorem}[section]
\newtheorem{lemma}[theorem]{Lemma}
\newtheorem{proposition}[theorem]{Proposition}
\newtheorem{corollary}[theorem]{Corollary}
\newtheorem{remark}[theorem]{Remark}
\newenvironment{definition}[1][Definition]{\begin{trivlist}
\item[\hskip \labelsep {\bfseries #1}]}{\end{trivlist}}
\renewcommand{\beq}{\simeq}
\def \rightarrowfill{\m@th\mathord{\smash-}\mkern-6mu%
  \cleaders\hbox{$\mkern-2mu\mathord{\smash-}\mkern-2mu$}\hfill
  \mkern-6mu\mathord\to}
\def \Rightarrowfill{\m@th\mathord{\smash=}\mkern-6mu%
  \cleaders\hbox{$\mkern-2mu\mathord{\smash=}\mkern-2mu$}\hfill
  \mkern-6mu\mathord\Rightarrow}
\newcommand{\ifte}[4]{{\bf if}\ #1=#2\ {\bf then}\ #3\ {\bf else}\ #4}
\newcommand{\ift}[3]{{\bf if}\ #1=#2\ {\bf then}\ #3}
\newcommand{\join}[1]{(#1)\rhd }
\newcommand{\cood}[1]{\,\mbox{\sc Cd}(#1)}
\title{On the Expressiveness of Joining}
\author{%
Thomas Given-Wilson
\institute{Inria, France}
\and
Axel Legay
\institute{Inria, France}
}
\begin{document}
\makeatactive

\maketitle  

\begin{abstract}
The expressiveness of communication primitives has been explored in a common framework
based on the $\pi$-calculus by considering
four features:
{\em synchronism} (asynchronous vs synchronous),
{\em arity} (monadic vs polyadic data),
{\em communication medium} (shared dataspaces vs channel-based),
and {\em pattern-matching} (binding to a name vs testing name equality  vs intensionality).
Here another dimension {\em coordination} is considered that accounts for the number of
processes required for an interaction to occur.
Coordination generalises binary languages such as $\pi$-calculus to {\em joining} languages
that combine inputs such as the Join Calculus and general rendezvous calculus.
By means of possibility/impossibility of encodings, this paper shows 
coordination is unrelated to the other features.
That is, joining languages are more expressive than binary languages, and no combination
of the other features can encode a joining language into a binary language.
Further, joining is not able to encode any of the other features unless they could be encoded
otherwise.
\end{abstract}

\section{Introduction}

The expressiveness of process calculi based upon their choice of communication primitives
has been explored before
\cite{Palamidessi:2003:CEP:966707.966709,journals/iandc/BusiGZ00,DeNicola:2006:EPK:1148743.1148750,G:IC08,GivenWilsonPHD,givenwilson:hal-01026301}.
In \cite{G:IC08} and \cite{givenwilson:hal-01026301} this is detailed by examining combinations
of four features, namely:
{\em synchronism}, asynchronous versus synchronous;
{\em arity}, monadic versus polyadic;
{\em communication medium}, shared dataspaces versus channels;
and {\em pattern-matching}, purely binding names versus name equality versus intensionality.
These features are able to represent many popular calculi \cite{G:IC08,givenwilson:hal-01026301} such as:
asynchronous or synchronous,  monadic or polyadic $\pi$-calculus
  \cite{Milner:1992:CMP:162037.162038,Milner:1992:CMP:162037.162039,milner:polyadic-tutorial};
\Linda~\cite{Gel85};
Mobile Ambients  \cite{DBLP:conf/fossacs/CardelliG98};
$\mu${\sc Klaim} \cite{10.1109/32.685256};
semantic-$\pi$ \cite{Castagna:2008:SSP:1367144.1367262};
and asymmetric concurrent pattern calculus \cite{DBLP:journals/corr/Given-Wilson14}.
Also the intensional features capture significant aspects of
Concurrent Pattern Calculus (CPC) \cite{GivenWilsonGorlaJay10,givenwilson:hal-00987578}
and variations \cite{GivenWilsonPHD,DBLP:journals/corr/Given-Wilson14};
and Psi calculi \cite{BJPV11} and sorted Psi calculi \cite{DBLP:conf/tgc/BorgstromGPVP13}.

Typically interaction in process calculi is a binary relation, where two processes interact
and reduce to a third process. For example in $\pi$-calculus the interaction rule is
\begin{eqnarray*}
\oap m a .P \bnf \iap m x .Q & \redar & P \bnf \{a/x\} Q\; .
\end{eqnarray*}
Here the processes $\oap m a .P$ and $\iap m x .Q$ interact and reduce to a new process $P \bnf \{a/x\}Q$.
However, there are process calculi that are not binary with their interactions.
For example, Concurrent Constraint Programming (CCP) has no direct interaction primitives, instead interactions are between a single process and the constraint environment \cite{Saraswat:1991:SFC:99583.99627}.
In the other direction Join Calculus \cite{Fournet_thereflexive}, general rendezvous calculus \cite{Bocchi2004119}, and m-calculus \cite{DBLP:conf/popl/SchmittS03}
allow any number of processes to join in a single interaction. 

This paper abstracts away from specific calculi in the style of \cite{G:IC08,givenwilson:hal-01026301}
to provide a general account of the expressiveness of the {\em coordination} of
communication primitives.
Here coordination can be either {\em binary} between an explicit input and output (as above),
or {\em joining} where the input may interact with unbounded outputs (but at least one).
For example, consider the reduction
\begin{eqnarray*}
\oap m a . P_1 \bnf \oap n b . P_2 \bnf \join{\iap m x \bnf \iap n y}Q
&\redar& P_1\bnf P_2\bnf \{a/x,b/y\}Q
\end{eqnarray*}
where the join $\rhd$ interacts when the two outputs $\oap m a$ and $\oap n b$ can
match the two parts of the input $\iap m x$ and $\iap n y$, respectively.

By adding the dimension of coordination, the original 24 calculi of
\cite{G:IC08,givenwilson:hal-01026301}
are here expanded to 48.
This paper details the relations between these calculi, with the following key results.

Joining cannot be encoded into a binary language. This is formalised via the {\em coordination degree}
of a language that is the least upper bound on the number of processes required to yield a reduction.
In general a language with a greater coordination degree cannot be encoded into a language with a
lesser coordination degree. That is, the joining languages with $\infty$ coordination degree cannot be
encoded into the binary languages with coordination degree 2.

Joining synchronous languages can be encoded into joining asynchronous languages when their binary
counterparts allow an encoding from a synchronous language into an asynchronous one.
In the other direction synchronous languages cannot be encoded into
asynchronous languages that differ only by the addition of joining over binary communication.

Polyadic languages that cannot be encoded into monadic languages in the binary setting cannot be
encoded into monadic languages simply with the addition of joining. Indeed, coordination is
unrelated to arity despite being similar in having a base case (monadic/binary) and an unbounded
case (polyadic/joining).

Channel-based languages cannot be encoded into dataspace-based languages by the addition of joining
unless they could be encoded already. In the other direction, the addition of channels does not allow
a joining language to be encoded into a binary language.

Intensionality cannot be encoded by joining regardless of other features, this result mirrors
the general result that intensionality cannot be represented by any combination of the first four
features \cite{givenwilson:hal-01026301}.
Name-matching cannot be encoded by joining into a
language without any name-matching, despite the possibility of matching unbounded numbers of
names via joining on an unbounded number of channels.

Overall, the results of this paper prove that joining is orthogonal to all the other features,
and that joining languages are strictly more expressive than binary languages.

The structure of the paper is as follows.
Section~\ref{sec:calculi} introduces the 48 calculi considered here.
Section~\ref{sec:encoding} revises the criteria used for encoding and comparing calculi.
Section~\ref{sec:join_only} defines the coordination degree of a language and formalises the
relation between binary and joining languages.
Section~\ref{sec:join_synch} considers the relation between synchronism and coordination.
Section~\ref{sec:join_arity} relates arity and coordination.
Section~\ref{sec:join_comm} presents results contrasting communication medium with coordination.
Section~\ref{sec:join_pattern} formalises the relation between pattern-matching and coordination.
Section~\ref{sec:conclude} concludes, discusses future and related work, and provides some motivations
for intensional calculi.

\section{Calculi}
\label{sec:calculi}

This section defines the syntax, operational, and behavioural semantics of the calculi
considered here. This relies heavily on the well-known notions developed for the
$\pi$-calculus (the reference framework) and adapts them when necessary to cope with
different features. With the exception of the joining constructs this is a repetition
of prior definitions from \cite{givenwilson:hal-01026301}.

Assume a countable set of names ${\mathcal N}$ ranged over by $a,b,c,\ldots$. Traditionally in 
$\pi$-calculus-style calculi names are used for channels, input bindings, and output data. However, here these are generalised to account for structure. Then, define the {\em terms} (denoted with $s,t,\ldots$) to be
$s,t ::= a\BNF s\bullet t$.
Terms consist of names such as $a$, or of {\em compounds} $s\bullet t$ that combines two terms into one.
The choice of the $\bullet$ as compound operator is similar to Concurrent Pattern Calculus, and also to be clearly distinct from the traditional comma-separated tuples of polyadic calculi.

The input primitives of different languages will exploit different kinds of {\em input patterns}.
The non-pattern-matching languages will simply use binding names (denoted $x,y,z,\ldots$).
The {\em name-matching} patterns, denoted $m,n,o,\ldots$ and defined by
$m,n ::= x\BNF \pro a$
consist of either a {\em binding name} $x$, or a {\em name-match} $\pro a$.
Lastly the {\em intensional patterns} (denoted $p,q,\ldots$) will also consider structure and are defined by
$p,q ::= m\BNF p\bullet q$.
The binding names $x$ and name-match $\pro a$ are contained in $m$ from the name-matching calculi,
the {\em compound pattern} $p\bullet q$ combines $p$ and $q$ into a single pattern, and is left associative.
The free names and binding names of name-matching and intensional patterns are as expected, taking
the union of sub-patterns for compound patterns. Note that an intensional pattern is well-formed
if and only if all binding names within the pattern are pairwise distinct.
The rest of this paper will only consider well-formed intensional patterns.

The (parametric) syntax for the languages is:
\begin{eqnarray*}
P,Q,R &::=& \zero\BNF OutProc \BNF InProc \BNF \res a P \BNF P|Q\BNF \ifte s t P Q \BNF *P\BNF\ok\; .
\end{eqnarray*}
The different languages are obtained by replacing the output $OutProc$ and input $InProc$ with the various definitions.
The rest of the process forms are as usual:
$\zero$ denotes the null process;
restriction $\res a P$ restricts the visibility of $a$ to $P$;
and parallel composition $P|Q$ allows independent evolution of $P$ and $Q$.
The $\ifte s t P Q$ represents conditional equivalence with $\ift s t P$ used when $Q$ is $\zero$.
The~$*P$ represents replication of the process $P$.
Finally, the $\ok$ is used to represent a success process or state, exploited for reasoning about
encodings as in \cite{G:CONCUR08,GivenWilsonPHD}.

This paper considers the possible combinations of five features for communication:
{\em synchronism} (asynchronous vs synchronous),
{\em arity} (monadic vs polyadic data),
{\em communication medium} (dataspace-based vs channel-based),
{\em pattern-matching} (simple binding vs name equality vs intensionality),
and {\em coordination} (binary vs joining).
As a result there exist 48 languages denoted as $\Lambda_{s,a,m,p,b}$ whose generic element is denoted as $\Lang_{\alpha,\beta,\gamma,\delta,\epsilon}$ where:
\begin{itemize}
\item $\alpha = A$ for asynchronous communication, and $\alpha = S$ for synchronous communication.
\item $\beta = M$ for monadic data, and $\beta = P$ for polyadic data.
\item $\gamma = D$ for dataspace-based communication, and $\gamma = C$ for channel-based communications.
\item $\delta = \mathit{NO}$ for no matching capability, $\delta = \mathit{\mathit{NM}}$ for name-matching, and $\delta = I$ for intensionality.
\item $\epsilon = B$ for binary communication, and $\epsilon = J$ for joining communication.
\end{itemize}
For simplicity a dash $-$ will be used when the instantiation of that feature is unimportant.

\begin{figure}[t]
\begin{equation*}
\begin{array}{rcll}
\Lang_{A,-,-,-,-}:&&OutProc ::= OUT\\
\Lang_{S,-,-,-,-}:&&OutProc ::= OUT.P\quad\ \ \\
\Lang_{-,-,-,-,B}:&&InProc ::= IN.P\\
\Lang_{-,-,-,-,J}:&&InProc ::= \join{\mathcal{I}}P&\mathcal{I}::=IN \ \ \bnf\ \ \mathcal{I}\bnf\mathcal{I}\\
\Lang_{-,M,D,\mathit{NO},-}:&&IN ::= \iap {} x& OUT::= \oap {} a\\
\Lang_{-,M,D,\mathit{NM},-}:&&IN ::= \iap {} m& OUT::= \oap {} a\\
\Lang_{-,M,D,I,-}:&&IN ::= \iap {} p& OUT::= \oap {} t\\
\Lang_{-,M,C,\mathit{NO},-}:&&IN ::= \iap a x& OUT::= \oap a b\\
\Lang_{-,M,C,\mathit{NM},-}:&&IN ::= \iap a m& OUT::= \oap a b\\
\Lang_{-,M,C,I,-}:&&IN ::= \iap s p& OUT::= \oap s t\\
\Lang_{-,P,D,\mathit{NO},-}:&&IN ::= \iap {} {\wt x}& OUT::= \oap {} {\wt a}\\
\Lang_{-,P,D,\mathit{NM},-}:&&IN ::= \iap {} {\wt m}& OUT::= \oap {} {\wt a}\\
\Lang_{-,P,D,I,-}:&&IN ::= \iap {} {\wt p}& OUT::= \oap {} {\wt t}\\
\Lang_{-,P,C,\mathit{NO},-}:&&IN ::= \iap a {\wt x}& OUT::= \oap a {\wt b}\\
\Lang_{-,P,C,\mathit{NM},-}:&&IN ::= \iap a {\wt m}& OUT::= \oap a {\wt b}\\
\Lang_{-,P,C,I,-}:&\qquad&IN ::= \iap s {\wt p}\qquad& OUT::= \oap s {\wt t}\vspace{-0.4cm}
\end{array}
\end{equation*}
\caption{Syntax of Languages.}
\label{fig:syntax}
\vspace{-0.3cm}
\end{figure}

Thus the syntax of every language is obtained from the productions in Figure~\ref{fig:syntax}.
The denotation $\wt \cdot$ represents a sequence of the form $\cdot _1,\cdot _2,\ldots,\cdot _n$ and can be used for names, terms, and input patterns.

As usual $\iap a {\ldots,x,\ldots}.P$ and $\res x P$ and $\iap {} {x\bullet\ldots} . P$ and $\join {\ldots\bnf\iap a x \bnf \ldots} P$ bind $x$ in $P$.
Observe that in $\iap a {\ldots,\pro b,\ldots} .P$ and $\iap {} {\ldots\bullet\pro b} .P$
neither $a$ nor $b$ bind in $P$, both are free.
The corresponding notions of free and bound names of a process, denoted ${\sf fn}(P)$ and ${\sf bn}(P)$,
are as usual.
Also note that $\alpha$-equivalence, denoted $=_\alpha$ is assumed in the usual manner.
Lastly, an input is well-formed if all binding names in that input occur exactly once. This paper shall only consider well-formed inputs.
Finally, the structural equivalence relation $\equiv$ is defined by: 
\begin{equation*}
\begin{array}{c}
P\bnf \zero \equiv P
\qquad\qquad
P\bnf Q \equiv Q \bnf P
\qquad\qquad
P\bnf (Q\bnf R) \equiv (P\bnf Q)\bnf R
\vspace{0.2cm} \\
\ifte s t P Q \equiv P\quad s = t
\qquad\qquad
\ifte s t P Q \equiv Q\quad s\neq t
\vspace{0.2cm} \\
P\equiv P'\quad\mbox{if}\ P=_\alpha P'
\qquad\qquad
\res a \zero \equiv \zero
\qquad \res a \res b P\equiv \res b \res a P
\vspace{0.2cm} \\
P\bnf \res a Q\equiv \res a (P\bnf Q)\quad \mbox{if}\ a\notin{\sf fn}(P)
\qquad\qquad
*P\equiv P\bnf *P \; .\vspace{-0.4cm}
\end{array}
\end{equation*}

Observe that $\Lang_{A,M,C,\mathit{NO},B}$, $\Lang_{A,P,C,\mathit{NO},B}$, $\Lang_{S,M,C,\mathit{NO},B}$, and $\Lang_{S,P,C,\mathit{NO},B}$
align with the communication
primitives of the asynchronous/synchronous monadic/polyadic $\pi$-calculus
\cite{Milner:1992:CMP:162037.162038,Milner:1992:CMP:162037.162039,milner:polyadic-tutorial}.
The language $\Lang_{A,P,D,\mathit{NM},B}$ aligns with \Linda \cite{Gel85};
the languages $\Lang_{A,M,D,\mathit{NO},B}$ and $\Lang_{A,P,D,\mathit{NO},B}$ with the monadic/polyadic Mobile Ambients \cite{DBLP:conf/fossacs/CardelliG98};
and $\Lang_{A,P,C,\mathit{NM},B}$ with that of $\mu${\sc Klaim} \cite{10.1109/32.685256} or semantic-$\pi$ \cite{Castagna:2008:SSP:1367144.1367262}.
The intensional languages do not exactly match any well-known calculi.
However,
the language $\Lang_{S,M,D,I,B}$ has been mentioned in \cite{GivenWilsonPHD}, 
as a variation of Concurrent Pattern Calculus \cite{GivenWilsonGorlaJay10,GivenWilsonPHD},
and has a behavioural theory as a specialisation of \cite{GivenWilsonGorla13}.
Similarly, the language $\Lang_{S,M,C,I,B}$ is very similar to pattern-matching Spi calculus
\cite{Haack:2006:PS:1165126.1165127}
and Psi calculi \cite{BJPV11},
albeit without the assertions or the possibility of repeated binding names in patterns.
There are also similarities between $\Lang_{S,M,C,I,B}$ and the polyadic synchronous $\pi$-calculus of
\cite{Carbone:2003:EPP:941344.941346}, although the intensionality is limited to the
channel, i.e.~inputs and outputs of the form $\iap s x.P$ and $\oap s a.P$ respectively.
For the joining languages:
$\Lang_{A,P,C,\mathit{NO},J}$ represents Join Calculus \cite{Fournet_thereflexive};
and $\Lang_{S,P,C,\mathit{NO},J}$ the general rendezvous calculus \cite{Bocchi2004119},
and m-calculus \cite{DBLP:conf/popl/SchmittS03}, although the latter has higher order constructs and
other aspects that are not captured within the features here.

\begin{remark}
\label{rem:leq}
The languages $\Lambda_{s,a,m,p,\epsilon}$ can be easily ordered; in particular
$\Lang_{\alpha_1,\beta_1,\gamma_1,\delta_1,\epsilon_1}$ can be encoded into
$\Lang_{\alpha_2,\beta_2,\gamma_2,\delta_2,\epsilon_2}$ if it holds that
$\alpha_1\leq\alpha_2$ and
$\beta_1\leq\beta_2$ and
$\gamma_1\leq\gamma_2$ and
$\delta_1\leq\delta_2$ and
$\epsilon_1\leq \epsilon_2$, where $\leq$ is the least reflexive relation satisfying the following axioms:
\begin{equation*}
A \leq S\qquad\qquad M\leq P\qquad\qquad D\leq C\qquad\qquad
\mathit{NO}\leq \mathit{NM}\leq I \qquad\qquad B\leq J\; .
\end{equation*}
This can be understood as the lesser language variation being a special case of the more general language.
Asynchronous communication is synchronous communication with all outputs followed by $\zero$.
Monadic communication is polyadic communication with all tuples of arity one.
Dataspace-based communication is channel-based communication with all $k$-ary tuples communicating with channel name $k$.
All name-matching communication is intensional communication without any compounds,
and no-matching capability communication is both without any compounds and with only binding names
in patterns.
Lastly, binary communication is joining communication with all joining inputs having only a single input pattern.
\end{remark}

The operational semantics of the languages is given here via reductions as in
\cite{milner:polyadic-tutorial,Honda95onreduction-based,givenwilson:hal-01026301}.
An alternative style is via a {\em labelled transition system} (LTS) such as \cite{G:IC08}.
Here the reduction based style is to simplify having to define here the (potentially complex)
labels that occur when both intensionality and joining are in play. However, the LTS style can be
used for intensional languages \cite{BJPV11,GivenWilsonPHD,GivenWilsonGorla13}, and indeed
captures many\footnote
{Perhaps all of the binary languages here, although this has not been proven.}
of the languages here \cite{GivenWilsonGorla13}.
For the joining languages the techniques used in \cite{Fournet99bisimulationsin} can be used for
the no-matching joining languages, with the techniques of \cite{GivenWilsonGorla13} used to extend
intensionality\footnote{This has not been proven as yet, however there appears no reason it should not be
straightforward albeit very tedious.}.

Substitutions, denoted $\sigma,\rho,\ldots$, in non-pattern-matching and name-matching
languages are mappings (with finite domain) from names to names. For intensional languages
substitutions are mappings (also finite domain) from names to terms.
The application of a substitution $\sigma$ to a pattern $p$ is defined by: 
\begin{equation*}
\sigma x = \sigma(x)\ \ x\in\mbox{domain}(\sigma)\qquad
\sigma x = x\ \ x\not\in\mbox{domain}(\sigma)\qquad
\sigma \pro x = \pro {(\sigma x)}\qquad
\sigma (p\bullet q) = (\sigma p)\bullet(\sigma q)\; .
\end{equation*}
Where substitution is as usual on names, and on the understanding that the name-match
syntax can be applied to any term as follows
$\pro x \define \pro x$ and $\pro {(s\bullet t)} \define \pro s\bullet \pro t$.

Given a substitution $\sigma$ and a process $P$, denote with $\sigma P$ the
(capture-avoiding) application of $\sigma$ to $P$ that behaves in the usual manner.
Note that capture can always be avoided by exploiting $\alpha$-equivalence, which can
in turn be assumed \cite{UBN07}. 

\renewcommand{\match}[2]{\{#1/\!\!/#2\}}

Interaction between processes is handled by matching some terms $\wt t$ with
some patterns $\wt p$, and possibly also equivalence of channel-names.
This is handled in two parts. The first part is the {\em match} rule $\match t p$ of a single term $t$
with a single pattern $p$ to create a substitution $\sigma$.
That is defined as follows:
\begin{eqnarray*}
\begin{array}{rcl}
\match t x &\define& \{t/x\}\\
\match a {\pro a} &\define& \{\}
\end{array}&\qquad\qquad&
\begin{array}{rcl}
\match {s\bullet t} {p\bullet q} &\define& \match s p \cup \match t q\\
\match t p &\mbox{undefined}& \mbox{otherwise.}
\end{array}
\end{eqnarray*}
Any term $t$ can be matched with a binding name $x$ to generate a substitution from the
binding name to the term $\{t/x\}$.
A single name $a$ can be matched with a name-match for that name $\pro a$ to yield the
empty substitution.
A compound term $s\bullet t$ can be matched by a compound pattern $p\bullet q$ when
the components match to yield substitutions $\match s p=\sigma_1$ and $\match t q=\sigma_2$,
the resulting substitution is the unification of $\sigma_1$ and $\sigma_2$.
Observe that since patterns are well-formed, the substitutions of components will always have
disjoint domain.
Otherwise the match is undefined.

The second part is then the {\em poly-match} rule $\pmtch (\wt t ; \wt p)$ that
determines matching of a sequence of terms $\wt t$ with a sequence of patterns $\wt p$,
defined below.
\begin{equation*}
\pmtch(;)=\emptyset \qquad\qquad
\prooftree \match s p =\sigma_1 \qquad \pmtch (\wt t ; \wt q) =\sigma_2
\justifies \pmtch(s,\wt t;p,\wt q) = \sigma_1\cup \sigma_2
\endprooftree \; .
\end{equation*}
The empty sequence matches with the empty sequence to produce the empty substitution.
Otherwise when there is a sequence of terms $s,\wt t$ and a sequence of patterns $p,\wt q$,
the first elements are matched $\match s p$ and the remaining sequences use the poly-match rule.
If both are defined and yield substitutions, then the union of substitutions is the
result.
(Like the match rule, the union is ensured disjoint domain by well-formedness of inputs.)
Otherwise the poly-match rule is undefined, for example when a single match fails, or the
sequences are of unequal arity.

Interaction is now defined by the following axiom for the binary languages:
\begin{eqnarray*}
\oap s {\wt t}.P\bnf \iap s {\wt p}.Q &\quad\redar\quad& P\bnf\sigma Q
\qquad\qquad \pmtch(\wt t;\wt p)=\sigma
\end{eqnarray*}
and for the joining languages:
\begin{eqnarray*}
\oap {s_1} {\wt {t_1}}.P_1\bnf\ldots\bnf\oap {s_k} {\wt {t_k}}.P_k\bnf 
\join{\iap {s_1} {\wt {p_1}}\bnf\ldots\bnf\iap {s_k} {\wt {p_k}}}Q &\redar& P_1\bnf\ldots\bnf P_k\bnf\sigma Q
\qquad \sigma=\bigcup_{\{i\in 1\ldots k\}}\pmtch(\wt {t_i};\wt {p_i})\; .
\end{eqnarray*}
In both axioms, the $P$'s are omitted in the asynchronous languages,
and the $s$'s are omitted for the dataspace-based languages.
The axioms state that when the poly-match of the terms of the output(s) $\wt t$ match with the
input pattern(s) of the input $\wt p$ (and in the channel-based setting the output and input pattern(s)
are along the same channels) yields a substitution $\sigma$, then reduce to ($P$(s) in the synchronous languages
in parallel with) $\sigma$ applied to $Q$.

The general reduction relation $\redar$ includes the interaction axiom for the language in question
as well as the following three rules:
\begin{equation*}
\begin{array}{c}
\prooftree P\redar P'
\justifies P\bnf Q \redar P'\bnf Q
\endprooftree \qquad
\prooftree P\redar P'
\justifies \res a P \redar \res a P'
\endprooftree \qquad
\prooftree P\equiv Q\quad Q\redar Q'\quad Q'\equiv P'
\justifies P\redar P'
\endprooftree\quad\; .
\end{array}
\end{equation*}
The reflexive transitive closure of $\redar$ is denoted by $\Redar$.

Lastly, for each language let $\beq$ denote
a reduction-sensitive reference behavioural equivalence for that language, e.g.~a barbed equivalence.
For the non-intensional languages these are mostly already known, either by
their equivalent language in the literature, such as asynchronous/synchronous monadic/polyadic
$\pi$-calculus or Join Calculus, or from \cite{G:IC08}.
For the intensional languages the results in \cite{GivenWilsonGorla13} can be used.
For the other joining languages the techniques used in \cite{Fournet99bisimulationsin} can be used for
the no-matching joining languages, with the techniques of \cite{GivenWilsonGorla13} used to extend
intensionality\footnote{This has not been proven as yet, however there appears no reason it should not be
possible, and the results here rely upon the existence of an equivalence relation, not any particular one.}.

\section{Encodings}
\label{sec:encoding}

This section recalls the definition of valid encodings as well as some
useful theorems (details in \cite{G:CONCUR08}) for formally relating process calculi.
The validity of such criteria in developing expressiveness studies emerges from the
various works \cite{G:IC08,G:DC10,G:CONCUR08}, that have also recently inspired similar works
\cite{LPSS10,Lanese:2010:EPP:2175486.2175506,gla12}. 

An {\em encoding} of a language $\Lang_1$ into another language $\Lang_2$ is a pair
$(\encode\cdot,\renpol)$ where $\encode\cdot$ translates every $\Lang_1$-process into
an $\Lang_2$-process and $\renpol$ maps every name (of the source language) into a tuple
of $k$ names (of the target language), for $k > 0$.
The translation $\encode\cdot$ turns every term of the source language into a term of the
target; in doing this, the translation may fix some names to play a precise r\^ole 
or may translate a single name into a tuple of names. This can be obtained
by exploiting $\renpol$.

Now consider only encodings that satisfy the following properties.
Let a {\em $k$-ary context} $\context C {\cdot_1; \ldots; \cdot_k}$ be a term where $k$
occurrences of $\zero$ are linearly replaced by the holes $\{\cdot_1;
\ldots; \cdot_k\}$ (every one of the $k$ holes must occur once and only once).
Denote with $\redar_i$ and $\Redar_i$ 
the relations $\redar$ and $\Redar$ in language $\Lang_i$;
denote with $\redar^\omega_i$ an infinite sequence of reductions in $\Lang_i$.
Moreover, let $\beq_i$ denote the reference behavioural equivalence for language $\Lang_i$.
Also, let $P \suc_i$ mean that there exists $P'$ such that $P \Redar_i P'$ and $P' \equiv P''\bnf \ok$,
for some $P''$.
Finally, to simplify reading, let $S$ range
over processes of the source language (viz., $\Lang_1$) and $T$ range
over processes of the target language (viz., $\Lang_2$).

\begin{definition}[Valid Encoding]
\label{def:ve}
An encoding $(\encode\cdot,\renpol)$ of $\Lang_1$ into $\Lang_2$
is {\em valid} if it satisfies the following five properties:
\begin{enumerate}
\item {\em Compositionality:} for every $k$-ary operator $\op$ of $\Lang_1$
and for every subset of names $N$,
there exists a $k$-ary context $\CopN C \op N {\cdot_1; \ldots; \cdot_k}$ of $\Lang_2$
such that, for all $S_1,\ldots,S_k$ with ${\sf fn}(S_1,\ldots,S_k) = N$, it holds
that $\encode{\op(S_1,\ldots,S_k)} = \CopN C \op N {\encode{S_1};\ldots;\encode{S_k}}$.

\item {\em Name invariance:}
for every $S$ and substitution $\sigma$, it holds that
$\encode{\sigma S} = \sigma'\encode S$ if $\sigma $ is injective and
$\encode{\sigma S} \beq_2\ \sigma'\encode S$ otherwise
where $\sigma'$ is such that 
$\renpol(\sigma(a)) = \sigma'(\renpol(a))$
for every name $a$.

\item {\em Operational correspondence:}
\begin{itemize}
\item for all $S \Redar_1 S'$, it holds that $\encode S \Redar_2 \beq_2 \encode {S'}$;
\item for all $\encode S \Redar_2 T$, there exists $S'$ such that $S \Redar_1\!\! S'$ 
and $T \Redar_2 \beq_2\!\! \encode {S'}$.
\end{itemize}

\item {\em Divergence reflection:}
for every $S$ such that 
$\encode S \redar\!\!_2^\omega$, it holds that  $S$ \mbox{$\redar\!\!_1^\omega$}.

\item {\em Success sensitiveness:}
for every $S$, it holds that $S \suc_1$ if and only if $\encode S \suc_2$.
\end{enumerate}
\end{definition}

Now recall two results concerning valid encodings that are useful for later proofs.

\begin{proposition}[Proposition 5.5 from \cite{G:CONCUR08}]
\label{prop:deadlock}
Let $\encode\cdot$ be a valid encoding; then, $S \noredar\!\!_1$
implies that $\encode S \noredar\!\!_2$.
\end{proposition}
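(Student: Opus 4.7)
The plan is to reason by contradiction, combining operational correspondence with divergence reflection. Suppose $S \not\redar_1$ but, for contradiction, $\encode{S} \redar_2 T$ for some $T$. I would first apply the second clause of operational correspondence (item 3 of Definition~\ref{def:ve}) to this one-step reduction: it supplies $S'$ with $S \Redar_1 S'$ and $T \Redar_2 T_0$ where $T_0 \beq_2 \encode{S'}$. Since $S$ cannot perform any $\redar_1$ step, the reflexive-transitive closure $\Redar_1$ leaves $S$ unchanged, forcing $S' = S$. Hence $T \Redar_2 T_0$ with $T_0 \beq_2 \encode{S}$.

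I would then bootstrap this into an infinite reduction sequence of $\encode{S}$. The observation is that the pair $(\encode{S}, T_0)$ reproduces the original situation: $\encode{S}$ still performs the reduction $\encode{S} \redar_2 T$, and $T_0 \beq_2 \encode{S}$. By reduction-sensitivity of $\beq_2$, the process $T_0$ must match this step, yielding a non-empty sequence $T_0 \Redar_2 T_0'$ with $T_0' \beq_2 T$. Re-applying operational correspondence to the extended reduction $\encode{S} \redar_2 T \Redar_2 T_0'$ (and again invoking $S \not\redar_1$) returns some $T_1$ with $T_0' \Redar_2 T_1$ and $T_1 \beq_2 \encode{S}$, and the same construction restarts from $T_1$. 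Splicing the finite segments together gives $\encode{S} \redar_2^{\omega}$. Divergence reflection (clause 4) then yields $S \redar_1^{\omega}$, contradicting the assumption $S \not\redar_1$.

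The main obstacle is the iteration step, namely turning a single reducibility witness of $\encode{S}$ into an infinite chain. Its soundness hinges on $\beq_2$ being reduction-sensitive in the sense of matching reductions up to $\beq_2$ (a weak-bisimulation-like property), which is exactly the standing hypothesis on the reference equivalences fixed in Section~\ref{sec:calculi}. Once this matching property is in hand, the remainder is bookkeeping: each round of operational correspondence produces a $\Redar_2$-segment that is non-empty because $\encode{S}$ itself reduces, so the composed sequence is genuinely infinite rather than stuttering. No other clause of Definition~\ref{def:ve} is needed beyond items 3 and 4.
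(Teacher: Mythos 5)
Your argument is correct and is essentially the proof of Proposition~5.5 in the cited source: assume $\encode S \redar_2$, use the soundness half of operational correspondence together with $S\noredar_1$ to return to a process $\beq_2$-equivalent to $\encode S$, invoke reduction-sensitivity of $\beq_2$ to restart, and derive $\encode S\redar_2^\omega$, contradicting divergence reflection. The only cosmetic difference is that you assume a weak-bisimulation-style matching for $\beq_2$ where plain reduction-sensitivity (equivalent processes agree on the ability to reduce) already suffices, since you re-apply operational correspondence at each round anyway.
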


\begin{proposition}[Proposition 5.6 from \cite{G:CONCUR08}]
\label{prop:ctx_top}
Let $\encode\cdot$ be a valid encoding; then for every set of names $N$,
it holds that $\CopN C | N {\cdot_1,\cdot_2}$ has both its holes at top-level.
\end{proposition}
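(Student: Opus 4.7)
The plan is to argue by contradiction, relying on the two preceding properties of valid encodings together with \textsf{Proposition~\ref{prop:deadlock}}. The guiding probe will be the process $S \;\define\; \ok \bnf \zero$, which is carefully chosen because it enjoys two incompatible-looking properties: $S \suc_1$ (witnessed by the zero-length reduction, since $\ok \bnf \zero \equiv \ok$), and $S \noredar_1$ (no interaction can fire). Applying compositionality to the parallel operator, together with success sensitiveness and \textsf{Proposition~\ref{prop:deadlock}}, one obtains
\[
\CopN C | N {\encode \ok ;\encode \zero} \;\noredar_2 \qquad \text{and} \qquad \CopN C | N {\encode \ok ;\encode \zero} \;\suc_2,
\]
so in fact $\CopN C | N {\encode \ok ;\encode \zero} \equiv P' \bnf \ok$ for some $P'$: a top-level success barb is already exposed without performing any reduction.

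The core of the proof is then to locate where this exposed $\ok$ sits in the term. Syntactically there are exactly three options: (i) it already lives in the static skeleton of $\CopN C | N {\cdot_1; \cdot_2}$, independently of what is plugged into the holes; (ii) it comes from $\encode \zero$ sitting in the second hole; or (iii) it comes from $\encode \ok$ sitting in the first hole. I would dispose of (i) by compositionality: the same $\ok$ would survive in $\CopN C | N {\encode \zero ;\encode \zero} = \encode{\zero \bnf \zero}$, giving $\encode{\zero \bnf \zero} \suc_2$ and hence, by success sensitiveness, $\zero \bnf \zero \suc_1$, which is false. I would dispose of (ii) similarly: if $\encode \zero \equiv P'' \bnf \ok$, then $\encode \zero \suc_2$, hence $\zero \suc_1$ by success sensitiveness, again false. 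The only survivor is (iii), which forces the first hole of $\CopN C | N {\cdot_1; \cdot_2}$ to be unguarded, i.e.~at top level. A perfectly symmetric argument using $S' \define \zero \bnf \ok$ (noting that in the source language $|$ is commutative up to $\equiv$, so success sensitiveness applies with $S' \suc_1$) forces the second hole to be at top level too.

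The step I expect to be the genuine obstacle is making the three-way case split rigorous, because the relation $\equiv$ can rearrange a process significantly — in particular, scope extrusion can pull restrictions outward across parallel composition, and the associativity/commutativity axioms for $|$ can rebracket large subterms. I would therefore spend some care checking that none of the axioms of $\equiv$ listed in Section~\ref{sec:calculi} is able to \emph{un-guard} an occurrence of $\ok$: scope extrusion only moves $\res a \cdot$, the parallel axioms operate strictly above guards, and the conditional axioms for $\mathbf{if}$ do not apply because $s = t$ or $s \neq t$ are discharged once and for all by the side conditions, not dynamically. Consequently ``$\ok$ is at top level up to $\equiv$'' is equivalent to ``some syntactic occurrence of $\ok$ is not under any input, output, replication, or conditional prefix'', which is exactly the notion needed to conclude that the hole in which that $\ok$ resides is itself at top level in $\CopN C | N {\cdot_1; \cdot_2}$.
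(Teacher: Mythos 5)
Your argument is correct and is essentially the proof of Proposition~5.6 in the cited source \cite{G:CONCUR08}: the paper itself only imports the statement, and the standard proof proceeds exactly as you do, probing with $\ok\bnf\zero$ (and symmetrically $\zero\bnf\ok$), using Proposition~\ref{prop:deadlock} to rule out any reduction so that the success barb must already be exposed, and then using success sensitiveness on $\encode{\zero\bnf\zero}$ and $\encode{\zero}$ to force the exposed $\ok$ to originate from $\encode{\ok}$ sitting in an unguarded hole. The only cosmetic caveat is that to cover \emph{every} set of names $N$ one should pad the probe processes with inert subterms realising $\mathsf{fn}(S_1,S_2)=N$, a detail the original proof also elides.
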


\section{Joining vs Binary}
\label{sec:join_only}

This section considers the expressive power gained by joining. It turns out that joining adds expressive
power that cannot be represented by binary languages regardless of other features.

The expressive power gained by joining can be captured by the concept of the {\em coordination degree} of a language $\Lang$, denoted $\cood\Lang$,
as the least upper bound on the number of processes that must coordinate to yield a reduction in $\Lang$.
For example, all the binary languages $\Lang_{-,-,-,-,B}$ have coordination degree 2 since their
reduction axiom is only defined for two processes.
By contrast, the coordination degree of the joining languages is $\infty$ since there is no bound on the
number of inputs that can be part of a join. 

\begin{theorem}\label{thm:int_deg_gt}
If $\cood{\Lang_1}>\cood{\Lang_2}$ then there exists no valid encoding $\encode\cdot$ from $\Lang_1$ into $\Lang_2$.
\end{theorem}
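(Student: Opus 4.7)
Let $n = \cood{\Lang_2}$, which is necessarily finite since $\cood{\Lang_1} > n$; in particular $\Lang_1$ must admit a joining input of arity at least $n$. Proceeding by contradiction, assume that a valid encoding $\encode\cdot$ from $\Lang_1$ into $\Lang_2$ exists. The plan is to build a source process whose only reduction requires all $n+1$ parallel subcomponents to coordinate at once, and then to derive a contradiction by a pigeonhole argument at the level of the encoding.

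Fix a single name $a$ and set $S = C_1 \bnf \ldots \bnf C_{n+1}$, where $C_1, \ldots, C_n$ are identical output atoms mentioning $a$ (in whichever medium and arity the fragment of $\Lang_1$ supplies) and $C_{n+1}$ is a join whose $n$ input patterns each match such outputs, with continuation $\ok$. By construction $S \suc_1$; every proper subset of the $C_i$ is stuck and unsuccessful; and every nonempty sub-composition has free name set exactly $\{a\}$. For each $i$, introduce the inert filler $C_i' = \ifte a a \zero \zero$, which is structurally congruent to $\zero$ yet has free name set $\{a\}$. Let $S_i'$ denote $S$ with $C_i$ replaced by $C_i'$; then $S_i'$ is structurally congruent to the parallel composition of the $C_j$ for $j \neq i$, so it is stuck and unsuccessful. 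Proposition~\ref{prop:deadlock} and success sensitiveness therefore give $\encode{S_i'} \noredar_2$ and $\encode{S_i'} \not\suc_2$ for every $i$.

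Because every intermediate free name set is $\{a\}$, iterating compositionality together with Proposition~\ref{prop:ctx_top} yields, modulo scope extrusion, a structurally congruent decomposition $\encode S \equiv \encode{C_1} \bnf \ldots \bnf \encode{C_{n+1}} \bnf R$ for some fixed process $R$ gathered from the residues of the nested parallel-composition contexts; the analogous decomposition holds for each $\encode{S_i'}$ with $\encode{C_i}$ replaced by $\encode{C_i'}$ and the same $R$. In particular $\encode S$ cannot already contain $\ok$ at top level, for otherwise the same $\ok$ would remain at top level of some $\encode{S_i'}$ and contradict $\encode{S_i'} \not\suc_2$; hence there is a genuine first reduction step $\encode S \redar_2 T_1$.

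That first step is an instance of the target interaction axiom, so by the very definition of coordination degree it involves at most $\cood{\Lang_2} = n$ atomic top-level participants (outputs together with at most one join). Each such participant is a subterm of exactly one of the parallel factors $\encode{C_1}, \ldots, \encode{C_{n+1}}, R$. Since there are $n+1$ distinct factors $\encode{C_i}$ but only $n$ participants, some index $j$ exists for which $\encode{C_j}$ contains no participant; replacing $\encode{C_j}$ by $\encode{C_j'}$ perturbs neither the participants nor the surrounding structure, so the very same axiom instance fires in $\encode{S_j'}$, contradicting $\encode{S_j'} \noredar_2$. The main delicacy is the residue bookkeeping of the previous paragraph: the single shared name $a$ and the specific filler $\ifte a a \zero \zero$ are exactly what force the contexts produced by Proposition~\ref{prop:ctx_top} to coincide at every stage of the iteration, so that swapping $\encode{C_i}$ for $\encode{C_i'}$ really does leave $R$ untouched and the reduction directly transportable.
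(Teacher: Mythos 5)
Your proof is correct and follows essentially the same strategy as the paper's: pick $\cood{\Lang_2}+1$ mutually necessary parallel components whose composition reduces to success, use compositionality and Proposition~\ref{prop:ctx_top} to place their encodings at top level, and then apply a pigeonhole argument on the at-most-$\cood{\Lang_2}$ participants of the first target reduction to contradict Proposition~\ref{prop:deadlock}. Your explicit witnesses and the free-name-preserving filler $\ifte a a \zero \zero$ (which keeps the compositionality contexts identical across the substitution) are welcome refinements of details the paper leaves implicit, but they do not change the route.
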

\begin{proof}
By contradiction, assume there is a valid encoding $\encode\cdot$.
Pick $i$ processes $S_1$ to $S_i$ where $i=\cood{\Lang_2}+1$ such that all these processes
must coordinate to yield a reduction and yield success.
That is, $S_1\bnf \ldots \bnf S_i\redar\ok$
but not if any $S_j$ (for $1\leq j\leq i$) is replaced by the null process $\zero$.
By validity of the encoding it must be that $\encode{S_1\bnf \ldots \bnf S_i}\redar$
and $\encode{S_1\bnf \ldots \bnf S_i}\suc$.

By compositionality of the encoding 
$\encode{S_1\bnf \ldots \bnf S_i}=\mathcal{C}_S$ where
$\mathcal{C}_S$ must be of the form\linebreak
$\context{C^N_|}{\encode{S_1},\context{C^N_|}{\ldots,\context{C^N_|}{\encode{S_{i-1}},\encode{S_i}}\ldots}}$.
Now consider the reduction $\encode{S_1\bnf \ldots \bnf S_i}\redar$ that can be at most between $i-1$
processes by the coordination degree of $\Lang_2$.
If the reduction does {\em not} involve some process $\encode {S_j}$ then it follows that
$\encode{S_1\bnf\ldots\bnf S_{j-1}\bnf\zero\bnf S_{j+1}\bnf\ldots\bnf S_i}\redar$
(by replacing the $\encode{S_j}$ in the context $\mathcal{C}_S$ with $\encode\zero$).
By construction of $S_1\bnf \ldots\bnf S_i$ and $\cood{\Lang_2}<i$ there must
exist some such $S_j$.
However, this contradicts the validity of the encoding since
$S_1\bnf\ldots\bnf S_{j-1}\bnf\zero\bnf S_{j+1}\bnf\ldots\bnf S_i\not\redar$.
The only other possibility is if $\encode{S_j}$ blocks the reduction by blocking some $\encode{S_k}$.
This can only occur when $\encode{S_k}$ is either underneath an interaction primitive
(e.g.~$\oap s {\wt t}.\encode{S_k}$) or inside a conditional (e.g.~$\ift s t {\encode{S_k}}$).
Both require that $\encode{S_k}$ not be top level in $\mathcal{C}_S$, which can be proven contradictory
by $i-1$ applications of Proposition~\ref{prop:ctx_top}.
\end{proof}

\begin{corollary}
\label{cor:join_gt_bin}
There exists no valid encoding from $\Lang_{-,-,-,-,J}$ into $\Lang_{-,-,-,-,B}$.
\end{corollary}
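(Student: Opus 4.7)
The plan is to derive this corollary directly from Theorem~\ref{thm:int_deg_gt} by computing the coordination degree of both classes of languages. First I would observe that any binary language $\Lang_{-,-,-,-,B}$ has $\cood{\Lang_{-,-,-,-,B}}=2$, since the binary interaction axiom involves exactly one output process and one input process, so no reduction in such a language can require more than two coordinating processes. This is essentially immediate from the form of the reduction axiom given in Section~\ref{sec:calculi}.

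Next I would argue that every joining language $\Lang_{-,-,-,-,J}$ has $\cood{\Lang_{-,-,-,-,J}}=\infty$. To see this, for any $k\geq 2$ exhibit a joined input $\join{\iap {s_1} {\wt{p_1}}\bnf\ldots\bnf\iap {s_k} {\wt{p_k}}}\ok$ together with matching outputs $\oap {s_1} {\wt{t_1}}\bnf\ldots\bnf\oap {s_k} {\wt{t_k}}$; by the joining reduction axiom the parallel composition reduces to $\ok$, whereas replacing any single output (or the join itself) by $\zero$ blocks the reduction since the joining axiom requires all $k$ input patterns to be matched simultaneously. Hence the number of processes required for a reduction is unbounded in $k$, so the least upper bound is $\infty$.

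With these two facts in hand, $\cood{\Lang_{-,-,-,-,J}}=\infty > 2=\cood{\Lang_{-,-,-,-,B}}$, and Theorem~\ref{thm:int_deg_gt} immediately yields the non-existence of a valid encoding. The only subtle point is the second step: one must be a little careful to exhibit, for each $k$, a concrete family of processes whose reduction genuinely needs all $k$ of them, so that the coordination degree is truly unbounded rather than merely large; the witness above (a join of $k$ inputs together with $k$ matching outputs, with each individual output being necessary for the joining poly-match to be defined) suffices and requires no assumptions about the particular features $(\alpha,\beta,\gamma,\delta)$ of the joining language.
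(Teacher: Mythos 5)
Your proposal is correct and follows essentially the same route as the paper, which leaves the corollary's proof implicit: the coordination degrees $\cood{\Lang_{-,-,-,-,B}}=2$ and $\cood{\Lang_{-,-,-,-,J}}=\infty$ are asserted in the paragraph preceding Theorem~\ref{thm:int_deg_gt}, and the corollary follows immediately. Your explicit witness family (a join of $k$ inputs with $k$ matching outputs, each necessary for the poly-match) is a welcome elaboration of what the paper takes for granted, and is exactly what the proof of Theorem~\ref{thm:int_deg_gt} needs when it picks $i$ processes that must all coordinate.
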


In the other direction the result is ensured by Remark~\ref{rem:leq}.
Thus for any two languages which differ only by one being binary and the other joining, the
joining language is strictly more expressive than the binary language.

\section{Joining and Synchronicity}
\label{sec:join_synch}

This section considers the relation between joining and synchronicity. It turns out that
the two are orthogonal and do not influence the other's expressiveness.

It is sufficient to consider the languages $\Lang_{A,M,D,\mathit{NO},J}$ and
$\Lang_{A,P,D,\mathit{NO},J}$ and $\Lang_{A,M,D,\mathit{NM},J}$.
The other asynchronous joining languages can encode their synchronous joining
counterparts in the usual manner \cite{honda1991object}. 
For example, the encoding from $\Lang_{S,M,C,\mathit{NO},B}$ into $\Lang_{A,M,C,\mathit{NO},B}$
given by
\begin{eqnarray*}
\encode{\oap n a . P} &\define& \res z (\oap n z \bnf \iap z x.    (\oap x a \bnf \encode{P}))\\
\encode{\iap n a . Q} &\define& \res x  \iap n z.(    \oap z x \bnf \iap x a . \encode{Q})
\end{eqnarray*}
can be adapted in the obvious manner for
$\Lang_{S,M,C,\mathit{NO},J}$ into $\Lang_{A,M,C,\mathit{NO},J}$ as follows
\begin{eqnarray*}
\encode{\oap n a . P} &\define& \res z (\oap n z \bnf \join {\iap z x}    (\oap x a \bnf \encode{P}))\\
\encode{\join{\iap {n_1} {a_1}\bnf\ldots\bnf \iap {n_i} {a_i}} Q} &\define&
\res {x_1,\ldots,x_i}  \join {\iap {n_1} {z_1}\bnf \ldots\bnf \iap {n_i} {z_i}}\\
&&\qquad(    \oap {z_1} {x_1} \bnf \ldots \bnf \oap {z_i} {x_i} \bnf \join
  {\iap {x_1} {a_1}\bnf\ldots\bnf \iap {x_i} {a_i} } \encode{Q})\; .
\end{eqnarray*}
The idea for binary languages is that the encoded output creates a fresh name $z$ and sends it to the encoded input.
The encoded input creates a fresh name $x$ and sends it to the encoded output along channel name $z$.
The encoded output now knows it has communicated and evolves to $\encode P$ in parallel with the
original $a$ sent to the encoded input along channel name $x$.
When the encoded input receives this it can evolve to $\encode Q$.
The joining version is similar except the join synchronises with all the encoded outputs at once,
sends the fresh names $x_j$ in parallel, and then synchronises on all the $a_j$ in the last step.

The encoding above is shown for $\Lang_{S,M,C,\mathit{NO},J}$ into $\Lang_{A,M,C,\mathit{NO},J}$
and is the identity on all other process forms. This can be proven to be a valid encoding.

\begin{lemma}
\label{lem:minussynch-match}
Given a $\Lang_{S,M,C,\mathit{NO},J}$ input $P$
and output $Q$
then $\xtrans P \bnf \xtrans Q \redar$ if and only if $P\bnf Q \redar$.
\end{lemma}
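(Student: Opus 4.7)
The plan is to analyse both directions by determining exactly which single-step reductions are enabled, leveraging the fact that the encoding is homomorphic on parallel composition and restriction. Write $P = \join{\iap{n_1}{y_1}\bnf\ldots\bnf\iap{n_i}{y_i}}P'$ for the input-prefixed process, and (up to $\alpha$-conversion and structural equivalence) decompose the output-only process $Q$ as a parallel composition $\oap{m_1}{b_1}.R_1\bnf\ldots\bnf\oap{m_k}{b_k}.R_k$ together with whatever residual context. Because the source language has no pattern-matching, $P\bnf Q\redar$ holds iff one can select, among the top-level outputs of $Q$, a subset of $i$ whose channels match $n_1,\ldots,n_i$ up to permutation.

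For the forward direction, assume this channel condition. Each encoded output is of the form $\res{z_j}(\oap{m_j}{z_j}\bnf\join{\iap{z_j}{x}}(\oap{x}{b_j}\bnf\xtrans{R_j}))$, so after scope extrusion of the fresh $z_j$'s (which can be chosen distinct from the names occurring in $\xtrans P$), the process $\xtrans Q$ exposes the top-level outputs $\oap{m_1}{z_1}\bnf\ldots\bnf\oap{m_k}{z_k}$. Since the outer join of $\xtrans P$ listens on exactly the channels $n_1,\ldots,n_i$, and these are covered by the $m_j$'s by assumption, the outer join fires, witnessing $\xtrans P\bnf\xtrans Q\redar$.

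For the backward direction, suppose $\xtrans P\bnf\xtrans Q\redar$. I would first argue that $\xtrans P$ on its own cannot reduce: its only top-level construct is the outer join, and the outputs $\oap{z_j}{x_j}$ appearing in its body sit under that join prefix and so are not available. Similarly $\xtrans Q$ on its own cannot reduce: each encoded output exposes only $\oap{m_j}{z_j}$ at top level, while its companion prefix $\join{\iap{z_j}{x}}$ listens on the restricted fresh channel $z_j$, which no other component can supply an output on; the nested $\oap{x}{b_j}$ is in turn blocked under that inner join. Hence any enabled reduction must synchronise the outer join of $\xtrans P$ with $i$ top-level outputs drawn from $\xtrans Q$, which forces their channels to cover $n_1,\ldots,n_i$; but this channel condition is exactly what is needed for $P\bnf Q\redar$.

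The principal obstacle is the exhaustive case analysis behind the backward direction: one must carefully rule out every ``internal'' reduction of $\xtrans P$ or $\xtrans Q$ in isolation, as well as any spurious cross-reduction (for instance, a top-level output of one encoded output synchronising with the inner join of another). Freshness of the bound names $z_j$ and $x_j$ introduced by the encoding, maintained by $\alpha$-conversion, combined with standard scope-extrusion manipulations to bring the relevant outputs of $\xtrans Q$ to top level alongside the outer join of $\xtrans P$, is what keeps the case analysis tractable.
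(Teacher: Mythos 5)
Your argument is correct and matches the evident intent of the paper: the paper actually states this lemma without printing a proof, and its surrounding discussion of the encoding (fresh restricted names $z$, the exposed top-level outputs $\oap{m}{z}$, and the outer join firing on exactly the source channels) is precisely the analysis you carry out in both directions. Your only liberty is reading ``output $Q$'' as a parallel composition of outputs rather than a single output prefix, which is the more general (and more useful) reading and subsumes the literal one, and your freshness-based exclusion of internal and cross reductions is exactly the case analysis the lemma needs.
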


\begin{lemma}
\label{lem:minussynch-equiv}
If $P\equiv Q$ then $\xtrans P \equiv \xtrans Q$.
Conversely, if $\xtrans P \equiv Q$ then $Q=\xtrans {P'}$ for some $P'\equiv P$.
\end{lemma}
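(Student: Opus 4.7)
The plan is to prove both directions by induction on a derivation of $\equiv$, exploiting the fact that $\xtrans{\cdot}$ is a homomorphism on every source constructor except outputs and joining inputs. On those two constructors the encoding produces a rigid target skeleton ─ a fresh restriction wrapping an output on the fresh name in parallel with a joining input on that fresh name ─ which can always be recognised and reconstructed.

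For the forward direction, I would induct on the derivation of $P \equiv Q$ and dispatch each axiom by commuting $\xtrans{\cdot}$ past the outermost constructor. The associativity and commutativity of $\bnf$, the replication axiom $*P \equiv P \bnf *P$, the conditional axioms, the swap of restrictions, and $\res a \zero \equiv \zero$ all lift trivially from $\xtrans{P \bnf Q} = \xtrans P \bnf \xtrans Q$, $\xtrans{\res a P} = \res a \xtrans P$, etc. The scope extrusion axiom $P \bnf \res a Q \equiv \res a (P \bnf Q)$ for $a \notin {\sf fn}(P)$ requires that $a \notin {\sf fn}(\xtrans P)$ too; this holds because the only extra names introduced by the encoding are bound by their own fresh restrictions and may be $\alpha$-renamed away from $a$. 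The $\alpha$-equivalence axiom lifts by name invariance of the encoding, since binding names are preserved.

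For the converse, I would induct on the derivation of $\xtrans P \equiv Q$. Reflexivity, symmetry, and transitivity follow from the induction hypothesis in the standard way (for transitivity through $R$, apply the hypothesis twice to obtain intermediate and final source-level witnesses that are chained by transitivity of $\equiv$ on the source). For each axiom applied inside $\xtrans P$, the rearrangement is mirrored at the source by an analogous use of the same axiom on $P$, yielding $P' \equiv P$ with $\xtrans{P'} = Q$. This works because the encoding places each output and each joining input under its own fresh local restriction, so these encoded blocks are opaque to $\equiv$-rearrangement at outer levels; any rearrangement $\equiv$ performs on $\xtrans P$ has a canonical preimage obtained by applying the same rearrangement to $P$.

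The main obstacle is scope extrusion in the backward direction: a fresh restriction $\res z$ (or the $\res{x_1, \ldots, x_i}$ of an encoded joining input) could be pulled out of its encoded block and mixed with surrounding material, producing a $Q$ that is not syntactically in the image of $\xtrans{\cdot}$ under the naive choice of fresh names. I plan to address this by $\alpha$-renaming those fresh names so that, up to the choice of fresh names in the encoding, $Q$ reads as $\xtrans{P'}$ for a rearranged source process. Concretely, for such a step I exhibit $P'$ by performing the matching source-level rearrangement via $\equiv$ on $P$ and argue that the freshness choices in $\xtrans{P'}$ can be synchronised with those in $Q$ by $\alpha$-conversion, closing the induction step.
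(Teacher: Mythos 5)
Your forward direction is sound: $\xtrans\cdot$ is homomorphic on every constructor except outputs and joins, preserves free names (the names $z$ and $x_1,\ldots,x_i$ it introduces are bound), and maps binders to binders, so each axiom of $\equiv$ lifts as you describe. (The paper itself gives no written proof of this lemma, so I am assessing your argument on its own merits.)

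The backward direction has a genuine gap, and it sits precisely at the obstacle you name: your proposed repair does not close it. Take $P=\oap n a.\zero\bnf R$. Then $\xtrans P=\res z (\oap n z \bnf \join{\iap z x}(\oap x a\bnf\xtrans\zero))\bnf\xtrans R$, and since $z\notin{\sf fn}(\xtrans R)$, scope extrusion gives $\xtrans P\equiv Q$ where $Q=\res z (\oap n z \bnf \join{\iap z x}(\oap x a\bnf\xtrans\zero)\bnf\xtrans R)$. There is no source process $P'$ with $\xtrans{P'}=Q$: the restriction $\res z$ is manufactured by the encoding and in every image of $\xtrans\cdot$ it wraps exactly one output and one join and nothing else, and it has no source-level counterpart, so there is no ``matching source-level rearrangement'' of $P$ that you could perform. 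Alpha-conversion cannot help because the defect is the shape of the scope, not the choice of the fresh name --- the encoded block is not, as you claim, opaque to $\equiv$ at outer levels, precisely because of extrusion. The statement is therefore false with syntactic equality and must be weakened to $Q\equiv\xtrans{P'}$ (push the extruded restrictions back in, which is itself an application of $\equiv$), or equivalently you must work with a normal form of $Q$ in which every encoding-introduced restriction has been re-contracted to its block. You should then verify that this weaker conclusion still supports its intended uses in Lemma~\ref{lem:minussynch-red} and Theorem~\ref{thm:synch}; it does, since those arguments only ever need the conclusion up to a further $\equiv$.
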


\begin{lemma}
\label{lem:minussynch-red}
The translation $\xtrans\cdot$ from $\Lang_{S,M,C,\mathit{NO},J}$ into $\Lang_{A,M,C,\mathit{NO},J}$ preserves
and reflects reductions.
\end{lemma}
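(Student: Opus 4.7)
The plan is to establish preservation and reflection separately, both by induction on reduction derivations, with Lemmas~\ref{lem:minussynch-match} and~\ref{lem:minussynch-equiv} supplying the technical glue between syntactic equivalence and interaction.

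For preservation, I would induct on the derivation of $P\redar P'$. Since $\xtrans\cdot$ is the identity on every process form other than outputs and joining inputs, the congruence rules for parallel composition and restriction propagate directly, and the structural-equivalence rule is discharged by Lemma~\ref{lem:minussynch-equiv}. The only substantive case is the joining interaction axiom: given $k$ outputs $\oap{n_j}{a_j}.P_j$ in parallel with $\join{\iap{n_1}{b_1}\bnf\cdots\bnf\iap{n_k}{b_k}}Q$ reducing in the source to $P_1\bnf\cdots\bnf P_k\bnf\{a_1/b_1,\ldots,a_k/b_k\}Q$, I would exhibit a three-phase sequence of target reductions: a single $k$-ary join on the outer channels $n_j$ exchanging the fresh names $z_j$; then $k$ independent binary handshakes relaying each $x_j$ along $z_j$; and finally a $k$-ary join on the $x_j$ delivering the $a_j$ to the continuation. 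After garbage-collecting the consumed restrictions using structural equivalence, the residual matches $\xtrans{P_1\bnf\cdots\bnf P_k\bnf\{a_1/b_1,\ldots,a_k/b_k\}Q}$ up to $\equiv$, which by Lemma~\ref{lem:minussynch-equiv} yields $\xtrans P\Redar\xtrans{P'}$.

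For reflection, given $\xtrans P \redar T$, I would perform case analysis on the instance of the interaction axiom witnessing the reduction. Because the channels $z_j$ and $x_j$ introduced by the encoding are bound by restriction local to each encoded term, at most three classes of interactions are possible at top level: a phase-(i) $k$-ary synchronisation between $k$ encoded outputs and an encoded joining input on the original channels; a phase-(ii) binary relay on some fresh $z_j$; or a phase-(iii) $k$-ary synchronisation on the fresh $x_j$. In each case, Lemma~\ref{lem:minussynch-match} identifies the participating encoded components, after which I would explicitly describe the unique continuation that completes the partial handshake, concluding $T\Redar\beq\xtrans{P'}$ for the source reduct $P'$ obtained by firing the corresponding source interaction.

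The main obstacle will be handling intermediate states produced by phase-(i) or phase-(ii) reductions when several encoded joins share outer channel names. Freshness of the restricted $z_j$ and $x_j$ is essential: once a phase-(i) join commits specific encoded outputs to a specific encoded joining input, the residual contains uniquely complementary actions on the newly restricted names, so no other encoded interaction can interleave destructively, and every completion path leads to the encoded image of the intended source reduct. Formalising this requires careful bookkeeping of which outputs and joining inputs have already been committed in partial handshakes, but once the invariant is stated, the remaining verification is routine given Lemmas~\ref{lem:minussynch-match} and~\ref{lem:minussynch-equiv}.
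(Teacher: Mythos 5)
Your proposal is correct and follows essentially the same route the paper intends: the three-phase protocol (outer join on the $n_j$, per-channel relay on the restricted $z_j$, final join on the restricted $x_j$) is exactly how the paper describes the encoding operating, and your use of Lemmas~\ref{lem:minussynch-match} and~\ref{lem:minussynch-equiv} together with the freshness/determinacy argument for completing partial handshakes is the standard way this preservation/reflection lemma is discharged. Nothing further is needed beyond the routine scope-extrusion and garbage-collection bookkeeping you already identify.
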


\begin{theorem}
\label{thm:synch}
There is a valid encoding from $\Lang_{S,M,C,\mathit{NO},J}$ into $\Lang_{A,M,C,\mathit{NO},J}$.
\end{theorem}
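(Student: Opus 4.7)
The plan is to verify the five clauses of Definition~\ref{def:ve} for the translation $\xtrans{\cdot}$ given in the text, leveraging the three technical lemmas (Lemma~\ref{lem:minussynch-match}, \ref{lem:minussynch-equiv}, \ref{lem:minussynch-red}) as black boxes and taking $\renpol$ to be the identity on names.

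Compositionality is immediate: the clauses defining $\xtrans{\cdot}$ on $OutProc$ and on $\join{\mathcal{I}}Q$ give explicit contexts $\context{C^N_{\op}}{\cdot_1;\ldots;\cdot_k}$ for each operator of $\Lang_{S,M,C,\mathit{NO},J}$, and on the remaining process forms the encoding is homomorphic, so the required contexts are just those operators applied to the holes. Name invariance follows because the fresh names $z, z_i, x_i$ introduced by the encoding are $\alpha$-renameable and bound by restriction, so any substitution $\sigma$ on the source commutes with $\xtrans{\cdot}$ on the nose (with $\sigma' = \sigma$), provided $\sigma$ is injective; for non-injective $\sigma$ one obtains equality, hence also $\beq_2$.

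For operational correspondence, I would use Lemma~\ref{lem:minussynch-red} as the main driver. The completeness direction ($S \Redar_1 S'$ implies $\xtrans{S} \Redar_2 \beq_2 \xtrans{S'}$) follows by induction on the length of $S \Redar_1 S'$; for a single source reduction, the three-phase protocol of the encoding (sending the fresh $z$'s on $n_i$'s, returning the fresh $x_i$'s on $z_i$'s, and finally transmitting the $a_i$'s on $x_i$'s) produces a chain $\xtrans{S} \redar_2 \redar_2 \redar_2 T$ with $T \equiv \xtrans{S'}$, and Lemma~\ref{lem:minussynch-equiv} then gives $T \beq_2 \xtrans{S'}$. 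For soundness ($\xtrans{S} \Redar_2 T$ implies some $S \Redar_1 S'$ with $T \Redar_2 \beq_2 \xtrans{S'}$), the main observation is that any partial execution of the three-phase protocol can always be completed deterministically, since the fresh names $z$, $z_i$, $x_i$ are private and the continuations under them are singleton joins with unique matching partners. Hence one extends $T$ by at most two further reductions (saturating any in-flight handshakes) to reach some $T'$ that is structurally of the form $\xtrans{S'}$, and invokes Lemma~\ref{lem:minussynch-red} (reflection) to exhibit the corresponding $S \Redar_1 S'$.

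Divergence reflection follows from the same saturation argument: an infinite reduction sequence from $\xtrans{S}$ must contain infinitely many complete protocols (each partial protocol has bounded length), and each complete protocol corresponds via Lemma~\ref{lem:minussynch-red} to a source reduction, so $S$ itself is divergent. Finally, success sensitiveness is immediate because $\xtrans{\ok} = \ok$ and $\xtrans{\cdot}$ is homomorphic on $\mid$ and $\res a \cdot$, so $P \equiv P'' \bnf \ok$ iff $\xtrans{P} \equiv \xtrans{P''} \bnf \ok$ (using Lemma~\ref{lem:minussynch-equiv}), and then operational correspondence transports reachability of success in both directions.

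The main obstacle I expect is the soundness half of operational correspondence: reductions in $\xtrans{S}$ may interleave several partial protocols, and one must show that any such intermediate target state can still be matched, modulo $\beq_2$, to the encoding of an appropriate source derivative after finitely many further reductions. This relies crucially on the fact that the private names $z, z_i, x_i$ prevent interference between concurrent handshakes, so the joining atoms introduced by the encoding have unique partners; making this precise is the only place where the proof is genuinely more delicate than in the binary Honda--Tokoro case, because the intermediate join $\join{\iap{x_1}{a_1} \bnf \ldots \bnf \iap{x_i}{a_i}} \encode{Q}$ must synchronise with all $i$ corresponding outputs simultaneously.
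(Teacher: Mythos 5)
Your proposal is correct and follows essentially the same route as the paper: compositionality and name invariance by construction, operational correspondence and divergence reflection via Lemma~\ref{lem:minussynch-red}, and success sensitiveness by combining Lemma~\ref{lem:minussynch-red} with Lemma~\ref{lem:minussynch-equiv}. The only difference is that you spell out the saturation argument for partial protocol executions in the soundness half, detail which the paper delegates entirely to the (unstated) proof of Lemma~\ref{lem:minussynch-red}.
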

\begin{proof}
Compositionality and name invariance hold by construction.
Operational correspondence (with structural equivalence in the place of $\beq$)
and divergence reflection follow from Lemma~\ref{lem:minussynch-red}.
Success sensitiveness can be proved as follows: $P\suc$ means that there exists $P'$ and
$k\geq 0$ such that $P\redar^k P'\equiv P''\bnf \ok$; by exploiting Lemma~\ref{lem:minussynch-red}
$k$ times and Lemma~\ref{lem:minussynch-equiv} obtain that
$\xtrans P \redar^j \xtrans {P'}\equiv \xtrans {P''}\bnf \ok$ where $j$ can be determined from the
instantiations of Lemma~\ref{lem:minussynch-equiv},
i.e.~that $\xtrans P \suc$.
The converse implication can be proved similarly.
\end{proof}

\begin{corollary}
If there exists a valid encoding from $\Lang_{S,\beta,\gamma,\delta,B}$ into $\Lang_{A,\beta,\gamma,\delta,B}$
then there exists a valid encoding from $\Lang_{S,\beta,\gamma,\delta,J}$ into $\Lang_{A,\beta,\gamma,\delta,J}$.
\end{corollary}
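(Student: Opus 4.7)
The plan is to mimic the construction of Theorem~\ref{thm:synch}, instantiated for the general feature tuple $(\beta,\gamma,\delta)$. The key observation is that the encoding $\xtrans{\cdot}$ displayed just before Lemma~\ref{lem:minussynch-match} relies only on the ability to pass fresh names and to join on multiple inputs, both of which are available in every joining language. For an arbitrary $(\beta,\gamma,\delta)$, I would rewrite that encoding using the input and output forms of $\Lang_{A,\beta,\gamma,\delta,J}$: polyadic tuples if $\beta = P$, dataspace sends rather than channel sends if $\gamma = D$, and patterns respecting $\delta$ for the handshake names. The structural shape is preserved~--- each synchronous output spawns a fresh handshake name $z$ and awaits an acknowledgement $x$ before continuing, and a $k$-way join executes $k$ handshakes via a pair of $k$-way joins sandwiching $k$ parallel acknowledgement outputs. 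On all other process forms the encoding is homomorphic.

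Next, I would verify validity by generalising Lemmas~\ref{lem:minussynch-match}, \ref{lem:minussynch-equiv}, and \ref{lem:minussynch-red} to the feature tuple $(\beta,\gamma,\delta)$. Compositionality and name invariance hold by construction. The structural equivalence and reduction correspondence lemmas generalise because the handshake protocol is deterministic and terminating regardless of the auxiliary features: every completed handshake in the target matches exactly one join reduction in the source and contributes only finitely many intermediate steps. Operational correspondence, divergence reflection, and success sensitivity then follow exactly as in the proof of Theorem~\ref{thm:synch}.

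The main obstacle is that the hypothesis of the corollary does not dictate the internal shape of the assumed binary encoding, so the joining encoding cannot be derived from it compositionally. Instead I would use the hypothesis purely as a witness that $(\beta,\gamma,\delta)$ does not itself block synchronous-to-asynchronous encoding~--- for instance, an overly restrictive matching regime could in principle forbid exchanging a fresh name, but the very existence of the binary encoding shows this does not happen. Once that is ensured, the handshake template applies unchanged. The delicate point in each instantiation is ensuring that fresh handshake names do not clash with the matching discipline when $\delta \in \{\mathit{NM}, I\}$, which reduces to standard $\alpha$-conversion, and that distinct fresh tags are used per participant so that parallel handshakes within a single join do not interfere with one another.
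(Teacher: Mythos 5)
Your treatment of the channel-based targets ($\gamma=C$) matches the paper: Theorem~\ref{thm:synch}'s handshake encoding generalises directly to polyadic, name-matching and intensional channel-based joining languages, and the validity argument goes through via the analogues of Lemmas~\ref{lem:minussynch-match}--\ref{lem:minussynch-red}. The gap is in the dataspace-based half. You propose to instantiate the same template with ``dataspace sends rather than channel sends if $\gamma=D$'' and assert that the handshake is deterministic and non-interfering regardless of the auxiliary features. That is false: in a dataspace-based language the fresh acknowledgement names cannot in general be \emph{directed} to a particular participant, so concurrent handshakes steal each other's acknowledgements. This is exactly why the paper proves, immediately after this corollary, that there is \emph{no} valid encoding from $\Lang_{S,M,D,\mathit{NM},J}$ into $\Lang_{A,M,D,\mathit{NM},J}$ nor from $\Lang_{S,\beta,D,\mathit{NO},J}$ into $\Lang_{A,\beta,D,\mathit{NO},J}$. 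Your template, applied blindly, would ``prove'' those encodings exist.

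Your attempted repair --- using the hypothesis as a mere ``witness that $(\beta,\gamma,\delta)$ does not block'' the encoding --- is a non sequitur: the existence of \emph{some} valid binary encoding of unknown shape does not license the conclusion that \emph{your specific} handshake works in the target. What the paper actually does for $\gamma=D$ is invoke the classification from prior work: the only dataspace-based languages for which the hypothesis can hold are those that can already encode channels (Proposition~4.1 of the synchronism/arity/medium paper for the polyadic name-matching cases, Theorem~6.4 of the intensional paper for the intensional cases). One then composes the channel-based handshake solution with the encoding of channels into the dataspace-based target, rather than adapting the handshake to raw dataspace communication. To close your argument you would need to make that case split explicit and supply the channel-encoding step; without it the dataspace-based branch of the corollary is unproven.
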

\begin{proof}
Theorem~\ref{thm:synch} applies directly for all channel-based languages.
The only other cases can encode channels and so use encodings of the channel-based solution above.
For the polyadic and name-matching languages this holds by Proposition~4.1 of \cite{G:IC08},
otherwise for the intensional languages this holds by Theorem~6.4 of \cite{givenwilson:hal-01026301}.
\end{proof}

The following results complete the formalisation that coordination is orthogonal to synchronicity.

\begin{theorem}
\label{thm:no_synch_2_asynch}
There exists no valid encoding from $\Lang_{S,M,D,\mathit{NM},J}$ into $\Lang_{A,M,D,\mathit{NM},J}$.
\end{theorem}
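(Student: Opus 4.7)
The plan is to argue by contradiction. Suppose a valid encoding $\xtrans\cdot$ from $\Lang_{S,M,D,\mathit{NM},J}$ into $\Lang_{A,M,D,\mathit{NM},J}$ exists. The underlying intuition is that a synchronous output in the source blocks its continuation until consumption, whereas in the asynchronous target outputs are detached messages: no construct can force a process's continuation to wait for its own emitted message to be consumed, even with joining available. The argument adapts the standard synchronous-versus-asynchronous separation, and in particular the binary analogue $\Lang_{S,M,D,\mathit{NM},B}\not\hookrightarrow \Lang_{A,M,D,\mathit{NM},B}$ established in Theorem~6.4 of \cite{givenwilson:hal-01026301}, to show that joining in the target does not close this expressiveness gap.

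First I would fix a witness source process that requires a sequence of causally dependent synchronous handshakes, for instance $S=S_1\bnf S_2$ with $S_1=\oap{}a.\iap{}{\pro c}.\ok$ and $S_2=\iap{}{\pro a}.\oap{}c.\ok$. Individually, each $S_i$ is deadlocked (its first action has no partner), so Proposition~\ref{prop:deadlock} gives $\xtrans{S_i}\noredar_2$; yet $S$ reduces in two synchronous steps to a successful state, so by success sensitiveness $\xtrans{S}\suc_2$. Compositionality together with Proposition~\ref{prop:ctx_top} yields $\xtrans{S}=\CopN{C}{|}{N}{\xtrans{S_1};\xtrans{S_2}}$ with both encoded components at top level.

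Next I would analyse a reduction sequence $\xtrans{S}\Redar P\equiv P'\bnf\ok$: every step that is an actual interaction must draw its inputs and matching outputs from $\xtrans{S_1}$, $\xtrans{S_2}$ and the context, and the asynchronous semantics of the target means every output consumed is a stateless message. The goal is to extract from this reduction sequence, via name invariance and a reordering of independent interactions, a sub-derivation witnessing that the configuration obtained by replacing $\xtrans{S_1}$ (or $\xtrans{S_2}$) by $\xtrans{\zero}$ can still reach success; combined with Proposition~\ref{prop:deadlock} and success sensitiveness this contradicts the fact that the corresponding source process $\zero\bnf S_2$ (or $S_1\bnf\zero$) has no reductions.

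The main obstacle is ruling out encodings that exploit joining in the target to collapse several asynchronous messages into a single atomic event, thereby simulating a synchronous handshake within one side of the composition. To close this, I would choose the witness so that the two handshakes (on $a$ and on $c$) involve distinct free names playing opposite polarities in $S_1$ and $S_2$, so that no single joined input of $\xtrans{S}$ can coordinate both handshakes without relying on outputs contributed by both components. Name invariance then permits a renaming argument showing that the residual messages floating in the dataspace after the first interaction can be "diverted" to yield a successful run that leaves one $\xtrans{S_i}$ untouched. Making this swap argument precise, and in particular tracking the shape of the joined inputs and residual outputs through the reduction, is the delicate technical step; it relies crucially on the fact that in $\Lang_{A,-,-,-,-}$ an output once emitted carries no continuation, so the causal link between a message and any "follow-up" is mediated solely by other inputs, which are themselves open to interference.
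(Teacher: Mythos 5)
Your overall setup (contradiction, a witness requiring a synchronous handshake, appeal to Propositions~\ref{prop:deadlock} and~\ref{prop:ctx_top}) is reasonable, but the load-bearing step of your argument does not hold up. You propose to extract, from a successful run of $\xtrans{S_1\bnf S_2}$, a successful run of $\context{C^N_|}{\xtrans{S_1},\xtrans{\zero}}$ or $\context{C^N_|}{\xtrans{\zero},\xtrans{S_2}}$. There is no reason this should be possible: the very first interaction in the target may be a single join consuming outputs contributed by \emph{both} encoded components (indeed for a correct encoding it essentially must be, since each $\xtrans{S_i}$ is separately stuck by Proposition~\ref{prop:deadlock}), so neither component can be excised from the run. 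The ``remove a component'' skeleton works in Theorem~\ref{thm:int_deg_gt} only because the number of components there exceeds the target's coordination degree; here both languages are joining, the degree argument gives nothing, and with two components the pigeonhole step is vacuous. The paragraph where you flag the ``main obstacle'' (joins collapsing several messages into one atomic event) is in fact the entire difficulty, and the renaming/``diverting'' sketch you offer to overcome it is not yet an argument.

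What actually drives the paper's proof is different in kind: the target is monadic and dataspace-based, so an emitted acknowledgment cannot be directed to a particular recipient. The paper fixes $P=\join{\iap {} x}\ift x b \Omega$ and $Q=\oap {} a . Q'$, shows by placing several copies of $P$ (resp.\ $Q$) in parallel that the top-level joins of $\encode P$ and $\encode Q$ must contain a name-match $\iap {} {\pro n}$ (else spurious reductions arise, contradicting Proposition~\ref{prop:deadlock}), then case-splits on the name $d$ carried by the acknowledgment and uses the swap $\sigma=\{b/a\}$ together with the observables $\Omega$ and $\ok$ in the continuations: for instance $P\bnf \sigma Q\bnf Q$ has two behaviourally incompatible outcomes (diverge-and-succeed versus neither) that the encoded join cannot discriminate, contradicting divergence reflection and success sensitiveness. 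Note also that your witness is built entirely from name-matching prefixes $\iap{}{\pro a}$, $\iap{}{\pro c}$, and so loses the bind-then-test pattern ($\iap{}x$ followed by $\ift x b \Omega$) that makes the swap substitution bite under name invariance. To repair your proof you would need to replace the extraction step by an interference argument of this multi-copy, name-swapping kind; as written, the proposal names the right ingredients but assembles them around a step that fails.
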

\begin{proof}
The proof is by contradiction. Consider two processes
$P=\join{\iap {} x}\ift x b \Omega$ (where $\Omega$ is a divergent process) and $Q=\oap {} a . Q'$.
Since $P\bnf Q\redar$ by validity of the encoding $\encode{P\bnf Q}\redar$ and this must be
between some $R_1=\oap {} m$ for some $m$ and $R_2$. 
Observe that $R_1\bnf R_2$ cannot be a reduct of either $\encode {P}$ or $\encode{Q}$ since then
either $P$ or $Q$ would reduce and this contradicts Proposition~\ref{prop:deadlock}.

If $R_1$ arises from $\encode P$ then it can be shown that $\encode P$ must also include a top level
join since otherwise there would be no join in $\encode P$ that can bind some name to $x$ and
name invariance or divergence reflection would be shown to fail
(i.e.~$P\bnf Q\redar\ift a b\Omega\bnf Q'$ and $\{b/a\}\ift a b\Omega\bnf Q'\redar\!\!^\omega$ while
$\context{C^N_|}{\encode{P},\encode {Q}}\Redar$ does no inputs on any part of $\encode P$
and so must always or never diverge regardless of interaction with $\encode Q$).
Thus $\encode P$ must include a top level join and further it must include an input pattern
$\iap {} {\pro n}$ for some $n\neq m$ since otherwise if the join was only
$\join{\iap {} {z_1}\bnf\ldots\bnf\iap {}{z_i}}R'$ for some $\wt z$ and $R'$ then
$\encode{P\bnf \ldots\bnf P}$ for $i$ instances of $P$ would reduce while $P\bnf \ldots\bnf P$
does not contradicting Proposition~\ref{prop:deadlock}.
It follows that $\encode Q$ must include $\iap {} {\pro m}$ as part of some join under which there
must be an output that is able to send at least one name to $\encode P$ via an output $\oap {} {d}$ for some $d$
(this could be any number of names, but assume 1 here for simplicity).
Now consider the name $d$.
If $d=m$ then $\encode P \redar$ and this contradicts validity of the encoding since $P\not\redar$.
If $d=n$ then $n$ is not bound in $\encode P$ and so it can be shown that either: this fails name invariance
or divergence reflection
(again by $P\bnf Q\redar\ift a b\Omega\bnf Q'$ and $\{b/a\}\ift a b\Omega\bnf Q'\redar\!\!^\omega$);
or there must be a further input in $\encode P$ that is binding as in the next case.
If $d\neq m$ and $d\neq n$ then it can be shown that $\encode{P\bnf Q\bnf P}$ can reduce such that the
input under consideration interacts with the $\oap {} m$ from the other $\encode P$
and this ends up contradicting operational correspondence.

If $R_1$ arises from $\encode Q$ then it can be shown that $\encode Q$ must also include a top level
join since otherwise when $Q'=\Omega$ then $\encode Q$ would always diverge or never diverge 
regardless of interaction with $\encode P$ and this contradicts divergence reflection.
Thus $\encode Q$ must include a top level join and further it must include an input pattern
$\iap {} {\pro n}$ for some $n\neq m$ since otherwise if the join was only
$\join{\iap {} {z_1}\bnf\ldots\bnf\iap {}{z_i}}R'$ for some $\wt z$ and $R'$ then
$\encode{Q\bnf \ldots\bnf Q}$ for $i$ instances of $Q$ would reduce while $Q\bnf \ldots\bnf Q$
does not contradicting Proposition~\ref{prop:deadlock}.
Now consider when $Q'=\ift a b \ok$ and the substitution $\sigma=\{b/a\}$.
Clearly $P\bnf \sigma Q\bnf Q\redar S$ where either:
$S\redar\!\!^\omega$ and $S\suc$; or $S\not\redar\!\!^\omega$ and $S\not\suc$.
However it can be shown that the top level join in $\encode Q$ is not able to discriminate and
thus that there exist two possible reductions $\encode{P\bnf \sigma Q\bnf Q}\redar R'$ to an $R'$
where either:
$R'\redar\!\!^\omega$ and $R'\not\suc$; or $R'\not\redar\!\!^\omega$ and $R\suc$;
both of which contradict divergence reflection and success sensitiveness.
\end{proof}

\begin{theorem}
There exists no valid encoding from $\Lang_{S,\beta,D,\mathit{NO},J}$ into $\Lang_{A,\beta,D,\mathit{NO},J}$.
\end{theorem}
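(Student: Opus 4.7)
The plan is to proceed by contradiction, mirroring the argument of Theorem~\ref{thm:no_synch_2_asynch} but exploiting the stronger constraint that the target language lacks name-matching patterns. Assume a valid encoding $\encode\cdot$ exists, and reuse the source witnesses $P = \join{\iap {} x}(\ift x b \Omega)$ and $Q = \oap {} a . Q'$ with $a \neq b$; these are well-formed in $\Lang_{S,\beta,D,\mathit{NO},J}$ since $\iap {} x$ is a plain binding input and $\ift x b \Omega$ uses only the general conditional-equivalence construct. Because $P \bnf Q \redar$, validity gives $\encode{P \bnf Q} \redar$, and by Propositions~\ref{prop:ctx_top} and~\ref{prop:deadlock} (together with $P \not\redar$ and $Q \not\redar$) this reduction must be an interaction between some top-level join and $k$ top-level outputs drawn from $\encode P$ and $\encode Q$.

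I would then perform the same case split as in Theorem~\ref{thm:no_synch_2_asynch} on the side providing the output $R_1 = \oap {} m$ that drives the reduction, and re-establish in each case the existence of a top-level join in the corresponding encoded image. If $R_1$ is in $\encode P$, then $\encode P$ must carry a top-level join, because otherwise $\encode P$ cannot inspect anything $\encode Q$ emits, while (with $Q' = \zero$) the source distinguishes $Q$ from $\{b/a\}Q$ since only the latter makes $P \bnf Q$ diverge, breaking divergence reflection together with name invariance. If $R_1$ is in $\encode Q$, then $\encode Q$ must carry a top-level join: otherwise (taking $Q' = \Omega$) the decision whether $\encode Q$ diverges is fixed independently of $\encode P$, contradicting divergence reflection since in the source $P \bnf Q$ diverges only after the synchronous hand-off actually takes place. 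These two arguments port verbatim from the previous proof, as they use only divergence reflection and success sensitiveness, not name-matching.

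The decisive new observation is that in the $\mathit{NO}$ target the top-level join obtained above has the purely binding shape $\join{\iap {} {z_1}\bnf\ldots\bnf\iap {} {z_k}}R'$, so it fires with \emph{any} $k$ ambient outputs. In the $\encode P$ case, the hypothesis $R_1 \in \encode P$ supplies at least one top-level output in $\encode P$, so up to $\alpha$-renaming and scope extrusion the process $\encode{P \bnf \ldots \bnf P}$ with $k$ copies contributes at least $k$ top-level outputs; the binding join in one of the copies therefore consumes $k$ of them and $\encode{P \bnf \ldots \bnf P} \redar$, contradicting Proposition~\ref{prop:deadlock} because the source process $P \bnf \ldots \bnf P$ contains no outputs at all and cannot reduce. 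The $\encode Q$ case is symmetric using $k$ copies of $Q$, whose parallel composition is also stuck in the source since parallel synchronous outputs with no input do not interact. Hence both branches of the split are inconsistent and no valid encoding can exist.

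The main obstacle I expect is the bookkeeping needed to justify that the top-level joins and the replicated top-level outputs in $\encode{P \bnf \ldots \bnf P}$ (respectively $\encode{Q \bnf \ldots \bnf Q}$) can genuinely meet after $\alpha$-renaming, scope extrusion past restrictions, and unfolding of replication; this is routine because in the target asynchronous setting outputs are atomic and hence cannot guard the join, but it must be spelled out. Once that bookkeeping is in place, the $\mathit{NO}$ restriction collapses both branches immediately, making the proof noticeably tighter than its $\mathit{NM}$ counterpart in Theorem~\ref{thm:no_synch_2_asynch}.
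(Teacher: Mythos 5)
Your proposal is correct and takes essentially the same approach as the paper: the paper's entire proof is the remark that the result is ``proved in the same manner as Theorem~\ref{thm:no_synch_2_asynch}'', and you instantiate exactly that argument, rightly observing that in the $\mathit{NO}$ target the top-level join must be all-binding, so the self-interaction of $k$ copies of the relevant encoded process (already the key sub-case inside that theorem's proof) yields the contradiction with Proposition~\ref{prop:deadlock} immediately. The only detail neither you nor the paper spells out is the arity bookkeeping for $\beta=P$, where the poly-match rule additionally requires the copies' outputs to have arities matching the join's input patterns.
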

\begin{proof}
This is proved in the same manner as Theorem~\ref{thm:no_synch_2_asynch}.
\end{proof}

That joining does not allow for an encoding of synchronous communication alone is not surprising,
since there is no control in the input of which outputs are interacted with (without some other
control such as channel names or pattern-matching). Thus, being able to consume more outputs in
a single interaction does not capture synchronous behaviours.

\section{Joining and Arity}
\label{sec:join_arity}

This section considers the relation between joining and arity. It turns out that these are
orthogonal.
Although there appear to be some similarities in that both have a base case (monadic or binary),
and an unbounded case (polyadic or joining, respectively), these cannot be used to encode one-another.
This is captured by the following result.

\begin{theorem}
\label{thm:no_poly_2_join}
There exists no valid encoding from $\Lang_{A,P,D,\mathit{NO},B}$ into $\Lang_{A,M,D,\mathit{NO},J}$.
\end{theorem}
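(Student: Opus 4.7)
The plan is to derive a contradiction from the assumed validity of an encoding $\encode\cdot$ by constructing source witnesses whose tightly-restricted reducts the target cannot faithfully reproduce without admitting a forbidden ``mixing'' of monadic messages.

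First I build source witnesses. Pick four pairwise-distinct names $a,b,c,d$, let $\Omega \define \res n (\oap{}{n} \bnf *\iap{}{z}.\oap{}{z})$ be a divergent process containing no $\ok$, and set
\[
R \define \ift x a {\ift y b \ok}\bnf \ift x c {\ift y d \ok}\bnf \ift x a {\ift y d \Omega}\bnf \ift x c {\ift y b \Omega}
\]
together with $S_1 \define \oap{}{a,b}$, $S_2 \define \oap{}{c,d}$, $S_3 \define \iap{}{x,y}.R$ and $S \define S_1\bnf S_2\bnf S_3$. In $\Lang_{A,P,D,\mathit{NO},B}$ the only reducts of $S$ arise from binary polyadic interactions delivering either $\{a/x,b/y\}$ or $\{c/x,d/y\}$; each fires exactly one of the first two summands of $R$ to $\ok$ and never activates an $\Omega$ branch. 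Hence $S\suc$ and $S\not\redar^\omega$, so by validity $\encode S\suc$ and $\encode S\not\redar_2^\omega$. The mixed substitutions $\{a/x,d/y\}$ and $\{c/x,b/y\}$, which \emph{would} activate $\Omega$, are unreachable in the source.

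Next I impose structure on $\encode S$. By compositionality and iterated use of Proposition~\ref{prop:ctx_top}, $\encode{S_1}$, $\encode{S_2}$ and $\encode{S_3}$ all sit at top level in $\encode S$. Because the target language is asynchronous monadic dataspace-based with no matching, the only way $\encode S$ can reduce is via a top-level join inside $\encode{S_3}$ consuming monadic outputs contributed at the top level of $\encode{S_1}$ and/or $\encode{S_2}$. A short auxiliary argument in the style of the input analysis in Theorem~\ref{thm:no_synch_2_asynch} shows that $\encode{S_3}$ must contribute such a top-level join, for otherwise substituting $\encode\zero$ for $\encode{S_3}$ would preserve reducibility, contradicting Proposition~\ref{prop:deadlock}.

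The crux is to show that this join cannot distinguish which monadic outputs originate in $\encode{S_1}$ and which in $\encode{S_2}$. Since the target has no channel handle and no name-matching in input patterns, a $k$-ary join indiscriminately consumes any $k$ monadic outputs present in the ambient dataspace. Using name invariance with the swap $\sigma = \{b \leftrightarrow d\}$ and splicing the operational-correspondence traces obtained from $\encode{S_1\bnf S_3}\Redar_2 \beq_2 \encode{\{a/x,b/y\}R}$ and $\encode{S_2\bnf S_3}\Redar_2 \beq_2 \encode{\{c/x,d/y\}R}$, I exhibit a target reduction $\encode S \Redar_2 T$ in which the join's binders receive one of the mixed assignments. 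Operational correspondence then forces $T \Redar_2 \beq_2 \encode{S'}$ for some source-reachable $S'$; but no source-reachable $S'$ activates $\Omega$, whereas $T$ must, giving $\encode S \redar_2^\omega$ and contradicting divergence reflection.

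The main obstacle will be making the splicing step rigorous: I must show that no matter how $\encode{\oap{}{a,b}}$ protocolises the pair $(a,b)$ as a sequence of monadic emissions using restricted fresh names, the consuming join in $\encode{S_3}$ cannot guard against cross-consumption when both $\encode{S_1}$ and $\encode{S_2}$ are concurrently present. The key lever is the simultaneous absence of channels and name-matching: the fresh names introduced by $\encode{S_1}$ are symmetric to those introduced by $\encode{S_2}$ under name invariance, and Proposition~\ref{prop:ctx_top} forbids nesting one encoded output inside the other. This adapts the classical polyadic-versus-monadic dataspace separation argument to the joining setting and confirms that coordination over multiple outputs cannot supply the atomicity that polyadic communication requires.
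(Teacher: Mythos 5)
Your core insight --- that a join in $\Lang_{A,M,D,\mathit{NO},J}$, having neither channels nor name-matching in its input patterns, must consume \emph{any} $i$ monadic outputs indiscriminately --- is exactly the right lever, and your witnesses (two polyadic outputs plus a conditional continuation that diverges precisely on the mixed pairings) are a sensible way to try to exploit it. However, the step you yourself flag as ``the main obstacle'' is a genuine gap, not a technicality. You conclude from the possibility of a mixed consumption that the resulting target state $T$ ``must'' activate $\Omega$ and hence diverge. Nothing forces this: the conditional $\ifte s t P Q$ is available in the target language as well, so the continuation of the join in $\encode{S_3}$ can inspect the names it has received and react to a detected mix by, say, deadlocking or entering a recovery branch rather than simulating $R$ under the mixed substitution. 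Ruling out all such reactions requires analysing what the target may do \emph{after} a spurious reduction (roll-back, loss of the consumed outputs, compatibility with $\beq_2$), which is precisely the delicate machinery the paper only deploys for the intensionality separation (Theorem~\ref{thm:no_int_2_join}, sub-case (2)). As written, your contradiction with divergence reflection does not go through. A secondary gap: you assume the firing join lives in $\encode{S_3}$; your appeal to Proposition~\ref{prop:deadlock} only shows that $\encode{S_3}$ must participate in the reduction, not that it supplies the join rather than some of the outputs.

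The paper avoids all of this by arranging the contradiction against Proposition~\ref{prop:deadlock} instead of divergence reflection: it takes $P=\oap {} {a,b}$ and $Q=\iap {} {x,y}.\ok$, observes that the firing of $\encode{P\bnf Q}$ needs some number $i$ of indistinguishable top-level monadic outputs, and then manufactures a source process that does \emph{not} reduce (e.g.\ $i$ copies of $Q$, or copies of a fresh output of arity other than $2$ placed next to $Q$) whose encoding nevertheless supplies enough top-level outputs to fire that join. Since Proposition~\ref{prop:deadlock} forbids \emph{any} reduction of the encoding of a non-reducing source, it is irrelevant what the continuation does afterwards. If you want to salvage your line, redirect the mixing argument at this target: you do not need the mixed consumption to diverge, only to exist in the encoding of a source term that is stuck.
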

\begin{proof}
The proof is by contradiction, assume there exists a valid encoding $\encode\cdot$.
Consider the $\Lang_{A,P,D,\mathit{NO},B}$ processes $P=\oap {} {a,b}$ and $Q=\iap {} {x,y} .\ok$.
Clearly it holds that $P\bnf Q\redar\ok$ and so $\encode{P\bnf Q}\redar$ and $\encode{P\bnf Q}\suc$
by validity of the encoding. Now consider the reduction $\encode{P\bnf Q}\redar$.

The reduction must be of the form
$\oap {} {m_1} \bnf \ldots \bnf \oap {} {m_i} \bnf \join {\iap {} {z_1}\bnf\ldots\bnf\iap {} {z_i}} T'$
for some $\wt m$ and $\wt z$ and $i$ and $T'$.
Now consider the process whose encoding produces $\join {\iap {} {z_1}\bnf\ldots\bnf\iap {} {z_i}} T'$,
assume $Q$ although the results do not rely on this assumption.
If any $\oap {} {m_j}$ are also from the encoding of $Q$ then it follows that the encoding of
$i$ instances of $Q$ in parallel will reduce, i.e.~$\encode{Q\bnf \ldots\bnf Q}\redar$, while
$Q\bnf \ldots \bnf Q\not\redar$.
Now consider two fresh processes $S$ and $T$ such that $S\bnf T\redar$ with some arity that is not $2$
and $S\not\redar$ and $T\not\redar$.
It follows that $\encode{S\bnf T}\redar$ (and $\encode S \not\redar$ and $\encode T \not\redar$)
and $\encode {S\bnf T}$ must include at least one $\oap {} n$ to do so.
This $\oap {} n$ must arise from either $\encode{S}$ or $\encode {T}$, and conclude by showing that
the encoding of $i$ instances of either $S$ or $T$ in parallel with $Q$ reduces, while the
un-encoded processes do not.
\end{proof}

\begin{corollary}
If there exists no valid encoding from
$\Lang_{\alpha,P,\gamma,\delta,B}$ into $\Lang_{\alpha,M,\gamma,\delta,B}$, then there exists
no valid encoding from $\Lang_{\alpha,P,\gamma,\delta,-}$ into $\Lang_{\alpha,M,\gamma,\delta,J}$.
\end{corollary}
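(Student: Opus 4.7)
The plan is to argue by contradiction, reducing to the specific case already handled by Theorem~\ref{thm:no_poly_2_join} and then generalising along the remaining dimensions. Assume, for contradiction, a valid encoding $\encode\cdot$ from $\Lang_{\alpha,P,\gamma,\delta,-}$ into $\Lang_{\alpha,M,\gamma,\delta,J}$. First I would dispose of the case where the source coordination is $J$: by Remark~\ref{rem:leq}, every $\Lang_{\alpha,P,\gamma,\delta,B}$ process is syntactically a $\Lang_{\alpha,P,\gamma,\delta,J}$ process (a join with a single input pattern), so restricting the given encoding to binary-source processes yields a valid encoding of $\Lang_{\alpha,P,\gamma,\delta,B}$ into $\Lang_{\alpha,M,\gamma,\delta,J}$. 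It therefore suffices to refute the existence of such a binary-source encoding.

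For the binary-source case, I would replay the proof of Theorem~\ref{thm:no_poly_2_join} with the test processes tailored to the features $(\alpha,\gamma,\delta)$: take $P$ to be an arity-two output and $Q$ the matching arity-two input reducing to $\ok$ (using a channel name $s$ when $\gamma=C$, and name-match patterns when $\delta=\mathit{NM}$ or intensional when $\delta=I$; the synchronous case $\alpha=S$ adds a trailing $\zero$ to outputs, and the argument is unaffected). Since $P\bnf Q\redar\ok$, validity forces $\encode{P\bnf Q}\redar$, and this reduction must occur between some monadic outputs and a top-level join produced by $\encode P$, $\encode Q$, or a mixture of both. The same case analysis as in Theorem~\ref{thm:no_poly_2_join} — asking which encoded subterm supplies each output and each input pattern of the join, and then instantiating copies of $P$ or $Q$ to create unintended reductions — produces a contradiction with Proposition~\ref{prop:deadlock} or with compositionality plus name invariance.

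The role of the hypothesis arises in the residual subcase where the argument cannot force unintended reductions because the top-level join of the target is always, in the image of the encoding, populated by exactly one input pattern at the moment of firing (so joining is being used only as a bureaucratic wrapper, not to genuinely coordinate multiple outputs). In that degenerate scenario, one can strip the outer $\rhd$ from each such encoded input and obtain a valid encoding into $\Lang_{\alpha,M,\gamma,\delta,B}$, directly contradicting the hypothesis that no such binary-to-binary encoding exists. The main obstacle will be this last step: showing that the joining wrapper really is inessential in the degenerate subcase requires tracking how $\renpol$ rewrites source names (possibly to tuples) and ensuring that the resulting binary encoding still satisfies compositionality, name invariance, and operational correspondence. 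The channel-based settings ($\gamma=C$) are the delicate ones, because a join can coordinate outputs on different channels, and one must rule out that the encoding uses this freedom in any essential way before collapsing it to a binary encoding.
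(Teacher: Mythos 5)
Your overall strategy for the main cases --- replay Theorem~\ref{thm:no_poly_2_join} with the witness processes $P$ and $Q$ adapted to the features $(\alpha,\gamma,\delta)$, and reduce the joining-source case to the binary-source case via Remark~\ref{rem:leq} --- is essentially the paper's route for the dataspace-based languages. Two points of divergence matter. First, for $\delta=\mathit{NM}$ you say only that one uses name-match patterns, but the Theorem~\ref{thm:no_poly_2_join} argument as stated relies on the fact that in a no-matching dataspace target \emph{any} output can satisfy \emph{any} input pattern of a join, which is what makes the wrong-arity processes $S$ and $T$ produce unintended reductions. With name-matches in the target that universality fails; the paper instead takes $Q=\iap{}{x,y}.\ift a x \ok$ so that a genuine binding must occur, and then derives the contradiction from name invariance and success sensitiveness as in Theorem~\ref{thm:no_synch_2_asynch}. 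Second, for $\gamma=C$ the paper does not extend Theorem~\ref{thm:no_poly_2_join} at all but points to the technique of Theorem~\ref{thm:no_chan_2_join} (and the arity-uniformity argument in its corollary); you correctly flag the channel case as the delicate one but leave it to your degenerate-case machinery.

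That degenerate-case machinery is the genuine gap. In the paper the hypothesis ``no valid encoding from $\Lang_{\alpha,P,\gamma,\delta,B}$ into $\Lang_{\alpha,M,\gamma,\delta,B}$'' is used only to restrict the claim to feature combinations where the binary separation already holds (it is vacuous for, e.g., $\delta=I$, where tuples can be packed into compounds); the contradiction in every in-scope case is obtained \emph{directly}. You instead use the hypothesis contrapositively, via a dichotomy: either the join genuinely coordinates several outputs (and the replication argument bites), or every firing join has a single input pattern and one can strip the $\rhd$ to obtain a binary-to-binary encoding. Neither horn is established. You have not shown that a multi-pattern join firing in the image of the encoding always yields an unintended reduction once channels or name-matches are available in the target (the encoding can segregate the patterns of a join onto distinct restricted channels, so parallel copies of $\encode{S}$ need not satisfy the remaining patterns); and the stripping construction is only well defined if the image contains \emph{syntactically} no multi-pattern joins, whereas your condition is dynamic (``at the moment of firing''), leaving unreachable or dead multi-pattern joins with no translation and leaving compositionality, name invariance and operational correspondence of the stripped encoding unverified. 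As written, the dichotomy is neither exhaustive nor does its second horn deliver a valid encoding, so the contradiction with the hypothesis does not yet go through.
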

\begin{proof}
The technique in Theorem~\ref{thm:no_poly_2_join} applies to all dataspace-based no-matching languages.
Dataspace-based name-matching languages build upon Theorem~\ref{thm:no_poly_2_join} with
$Q=\iap {} {x,y}.\ift a x \ok$ to then ensure that binding occurs and not only name-matching,
the proof is concluded via contradiction of name invariance and success sensitiveness as in
Theorem~\ref{thm:no_synch_2_asynch}.
For the channel-based communication it is easier to refer to Theorem~\ref{thm:no_chan_2_join} to
illustrate that this is not possible than to extend the proof above.
\end{proof}

Thus joining does not allow for encoding polyadicity in a monadic language unless it could already be
encoded by some other means.
In the other direction, the inability to encode joining into a binary language is already 
ensured by Corollary~\ref{cor:join_gt_bin}.

\section{Joining and Communication Medium}
\label{sec:join_comm}

This section considers the relation between joining and communication medium. Again joining turns out
to be orthogonal to communication medium and neither can encode the other.
The key to this is captured in the following result.

\begin{theorem}
\label{thm:no_chan_2_join}
There exists no valid encoding from $\Lang_{A,M,C,\mathit{NO},B}$ into $\Lang_{A,M,D,\mathit{NO},J}$.
\end{theorem}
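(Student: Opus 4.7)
The plan is to argue by contradiction, adapting the analysis of Theorem~\ref{thm:no_poly_2_join} but exploiting the absence of channels in the target rather than the absence of polyadicity. Assume a valid encoding $\encode\cdot$ from $\Lang_{A,M,C,\mathit{NO},B}$ into $\Lang_{A,M,D,\mathit{NO},J}$ exists. Take $P = \oap a b$ and $Q = \iap a x.\ok$, so that $P \bnf Q \redar \ok$; by operational correspondence and success sensitiveness $\encode{P \bnf Q} \redar$ and $\encode{P \bnf Q} \suc$. By compositionality with $N = \{a,b\}$, and invoking Proposition~\ref{prop:ctx_top}, $\encode{P \bnf Q} = \context{C^N_|}{\encode P, \encode Q}$ with both holes at top level.

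First I would analyse the reduction $\encode{P \bnf Q} \redar$. Since the only interaction axiom of $\Lang_{A,M,D,\mathit{NO},J}$ is joining, the step is a join $\join{\iap {} {z_1}\bnf\ldots\bnf\iap {} {z_k}}T'$ consuming $k$ outputs $\oap {} {m_1},\ldots,\oap {} {m_k}$. Mirroring Theorem~\ref{thm:no_poly_2_join}, I would apply Proposition~\ref{prop:deadlock} to $P\bnf\ldots\bnf P$ and $Q\bnf\ldots\bnf Q$ (neither of which reduces in the source) to rule out the cases where the join and all the consumed outputs live entirely inside $\encode P$, entirely inside $\encode Q$, or entirely inside the surrounding context. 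In the remaining cases, at least one participant of the reduction sits in one hole and at least one other in the other hole.

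Next I would exploit the channel-blindness of the target. Let $\sigma = \{b/a,a/b\}$ be the swap substitution; it is injective and $\sigma Q = \iap b x.\ok$, so $P \bnf \sigma Q \not\redar$ in the source (the output uses channel $a$, the input awaits channel $b$). Therefore $\encode{P \bnf \sigma Q} \not\redar$ by Proposition~\ref{prop:deadlock}. By name invariance and injectivity, $\encode{\sigma Q} = \sigma' \encode Q$ for the induced $\sigma'$. Since ${\sf fn}(P\bnf Q) = {\sf fn}(P\bnf \sigma Q) = N$, compositionality yields $\encode{P \bnf \sigma Q} = \context{C^N_|}{\encode P, \sigma' \encode Q}$ with the \emph{same} context. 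Because $\Lang_{A,M,D,\mathit{NO},J}$ has neither channels nor name-matching, any join merely binds whatever name appears in each consumed output; applying $\sigma'$ only to $\encode Q$ preserves the shape of every join and every output inside it, so the join and the $k$ outputs identified above are still present (possibly renamed by $\sigma'$) at the same top-level positions, and the very same joining step remains enabled. Hence $\encode{P \bnf \sigma Q} \redar$, contradicting the previously established $\encode{P \bnf \sigma Q} \not\redar$.

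The hard part will be making the last implication rigorous: the reduction step has to transfer from $\context{C^N_|}{\encode P, \encode Q}$ to $\context{C^N_|}{\encode P, \sigma' \encode Q}$, where $\sigma'$ is applied asymmetrically to only one hole. This forces a careful case analysis on the location of the join and of each consumed output (in $\encode P$, in $\encode Q$, or in the surrounding context), verifying that the partial renaming of $\encode Q$ neither removes a required output nor alters the arity or structure of the join. The whole argument rests essentially on the fact that no primitive of $\Lang_{A,M,D,\mathit{NO},J}$ ever inspects the identity of names, so an asymmetric renaming cannot invalidate a reduction that was previously enabled.
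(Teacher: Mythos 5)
Your proposal is correct in substance, but it takes a genuinely different route from the paper's proof. The paper keeps the same $P=\oap a b$ and $Q=\iap a x.\ok$, observes that the first reduction of $\encode{P\bnf Q}$ must be a join consuming $i$ top-level outputs through patterns that are all binding names, and then introduces auxiliary processes $S=\oap c d$ and $T=\iap c z.\zero$ on a fresh channel: since $\encode{S\bnf T}$ must itself emit top-level outputs, the undiscriminating join coming from $\encode Q$ can consume the outputs contributed by $i$ copies of $\encode S$ (or of $\encode T$), so the encoding of a non-reducing source term reduces, contradicting Proposition~\ref{prop:deadlock}. You instead derive the contradiction from name invariance under the injective swap $\sigma=\{b/a,a/b\}$: the renamed component $\sigma'\encode Q$ presents the same top-level join/output structure as $\encode Q$, so the join step enabled in $\encode{P\bnf Q}$ remains enabled in $\encode{P\bnf\sigma Q}=\context{C^N_|}{\encode P,\sigma'\encode Q}$, while $P\bnf\sigma Q\not\redar$. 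Your version is arguably more economical: the case analysis you anticipate on the location of the join and of each consumed output is in fact unnecessary, since an injective renaming of one hole preserves every enabled join step wherever its participants sit; and it is essentially the swap technique the paper itself deploys in the dataspace case of Theorem~\ref{thm:no_name_2_join}. One point you should tighten: the closing claim that no primitive of $\Lang_{A,M,D,\mathit{NO},J}$ ever inspects the identity of names is literally false, because $\ifte s t {P'} {Q'}$ does, and such a conditional could guard the join or one of the outputs it consumes; what rescues the argument is that $\sigma'$ is induced by an injective $\sigma$, so every conditional in $\sigma'\encode Q$ resolves exactly as in $\encode Q$. The paper's formulation, by contrast, is the template it shares with Theorem~\ref{thm:no_poly_2_join} and extends in the subsequent corollary to the polyadic and name-matching variants, which is the main thing its longer detour buys.
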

\begin{proof}
The proof is by contradiction, assume there exists a valid encoding $\encode\cdot$.
Consider the $\Lang_{A,M,C,\mathit{NO},B}$ processes $P=\oap {a} {b}$ and $Q=\iap {a} {x} .\ok$.
Clearly it holds that $P\bnf Q\redar\ok$ and so $\encode{P\bnf Q}\redar$ and $\encode{P\bnf Q}\suc$
by validity of the encoding. Now consider the reduction $\encode{P\bnf Q}\redar$.

The reduction must be of the form
$\oap {} {m_1} \bnf \ldots \bnf \oap {} {m_i} \bnf \join {\iap {} {z_1}\bnf\ldots\bnf\iap {} {z_i}} T'$
for some $\wt m$ and $\wt z$ and $i$ and $T'$.
Now consider the process whose encoding produces $\join {\iap {} {z_1}\bnf\ldots\bnf\iap {} {z_i}} T'$,
assume $Q$ although the results do not rely on this assumption.
If any $\oap {} {m_j}$ are also from the encoding of $Q$ then it follows that the encoding of
$i$ instances of $Q$ in parallel will reduce, i.e.~$\encode{Q\bnf \ldots\bnf Q}\redar$, while
$Q\bnf \ldots \bnf Q\not\redar$.
Now consider two fresh processes $S=\oap c d$ and $T=\iap c z.\zero$.
Since $S\bnf T\redar$ it follows that $\encode{S\bnf T}\redar$ and must include at least one
$\oap {} n$ to do so.
This $\oap {} n$ must arise from either $\encode{S}$ or $\encode {T}$, and conclude by showing that
the encoding of $i$ instances of either $S$ or $T$ in parallel with $Q$ reduces, while the
un-encoded processes do not.
\end{proof}

\begin{corollary}
If there exists no valid encoding from
$\Lang_{\alpha,\beta,C,\delta,B}$ into $\Lang_{\alpha,\beta,D,\delta,B}$, then there exists
no valid encoding from $\Lang_{\alpha,\beta,C,\delta,-}$ into $\Lang_{\alpha,\beta,D,\delta,J}$.
\end{corollary}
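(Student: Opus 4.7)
The plan is to adapt the technique of Theorem~\ref{thm:no_chan_2_join} to cover all the feature combinations permitted by the corollary, mirroring how the analogous corollary after Theorem~\ref{thm:no_poly_2_join} generalises that theorem. First I reduce the joining-source case to the binary-source case: the trivial embedding of $\Lang_{\alpha,\beta,C,\delta,B}$ into $\Lang_{\alpha,\beta,C,\delta,J}$ granted by Remark~\ref{rem:leq} is a valid encoding, so any hypothetical valid encoding of $\Lang_{\alpha,\beta,C,\delta,J}$ into $\Lang_{\alpha,\beta,D,\delta,J}$ would compose with it to yield a valid encoding of $\Lang_{\alpha,\beta,C,\delta,B}$ into $\Lang_{\alpha,\beta,D,\delta,J}$. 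It therefore suffices to rule out valid encodings with binary source.

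For the binary-source case, I follow Theorem~\ref{thm:no_chan_2_join} with source processes chosen to exercise the relevant features: for $\beta = P$ use $P = \oap a {\wt b}$ and $Q = \iap a {\wt x}.\ok$; for $\alpha = S$ attach silent continuations. The shape of the argument is unchanged. The reduction $\encode{P\bnf Q}\redar$ must factor through a join in the target; compositionality forces the outputs feeding this join to originate from $\encode P$ and/or $\encode Q$, and by replicating whichever source process contributes a spurious output to that join one exhibits a configuration whose encoding reduces but whose source does not, contradicting Proposition~\ref{prop:deadlock}.

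For $\delta = \mathit{NM}$ I combine this scheme with the conditional-based argument used in Theorem~\ref{thm:no_synch_2_asynch}: strengthen $Q$ to $\iap a x. \ift b x \ok$ for a fresh $b$, so that any encoding that conflates the channel name $a$ with a data position can be exposed by comparing $\encode{P \bnf Q}$ with $\encode{\{b/a\}P \bnf Q}$, yielding a failure of either name invariance or success sensitiveness. The main obstacle is precisely this name-matching case, where the target's joining power might appear to offer enough discrimination to simulate channel equality on payload positions; the resolution is that compositionality together with Proposition~\ref{prop:ctx_top} still pin the problematic output $\oap {} m$ to a single source component, whereupon the conditional argument closes. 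The case $\delta = I$ need not be treated because compound terms already provide a valid binary encoding of channels into dataspaces, making the hypothesis of the corollary vacuous there.
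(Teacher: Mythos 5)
Your overall strategy matches the paper's: reuse the argument of Theorem~\ref{thm:no_chan_2_join} across the relevant feature combinations, handle name-matching via the techniques of Theorem~\ref{thm:no_synch_2_asynch}, and dismiss the intensional case as vacuous. Making the joining-source case explicit by composing with the embedding of Remark~\ref{rem:leq} is a reasonable way to spell out something the paper leaves implicit.

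There is, however, a genuine gap in your treatment of the polyadic case, and it is precisely the point to which the paper devotes most of its proof. The contradiction in Theorem~\ref{thm:no_chan_2_join} is not obtained merely by ``replicating whichever source process contributes a spurious output'': if all $i$ outputs feeding the join come from $\encode P$, replicating $P$ yields no contradiction, since $P\bnf\ldots\bnf P\bnf Q$ does reduce. The decisive step is the introduction of fresh processes $S=\oap c d$ and $T=\iap c z.\zero$ on a \emph{different} channel, together with the observation that in a monadic, dataspace-based, no-matching target \emph{any} output $\oap {} n$ produced by $\encode S$ or $\encode T$ can feed \emph{any} input pattern of the join arising from $\encode Q$, because the target has no means of discrimination. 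In a \emph{polyadic} dataspace target this is false: an output $\oap {} {\wt n}$ only matches an input pattern of the same arity, so the cross-feeding step fails whenever the encoding uses different arities for the interaction underlying $\encode{P\bnf Q}$ and the one underlying $\encode{S\bnf T}$. Your claim that ``the shape of the argument is unchanged'' therefore does not go through as stated. The paper closes this gap by first establishing that the arity must be uniform across the encoding --- a non-uniform arity already violates operational correspondence (e.g.~$\encode {\iap a x.\zero\bnf \oap a {b_1,b_2}}\redar$) or divergence reflection, by an argument analogous to sub-case~(2) of Theorem~\ref{thm:no_int_2_join} with arity in place of the number of names --- and only then applying the cross-feeding argument. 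Your proof needs this additional step (or some substitute for it) before the polyadic case is complete.
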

\begin{proof}
The technique in Theorem~\ref{thm:no_chan_2_join} applies to all monadic languages
(the addition of name-matching can be proved using the techniques as in Theorem~\ref{thm:no_synch_2_asynch}).
For the polyadic no-matching setting the result above holds by observing that the arity must
remain fixed for an encoding, i.e.~$\encode {\oap a {b_1,\ldots,b_i}}$ is\linebreak
encoded to inputs/outputs all of
some arity $j$. If the arity is not uniform then the encoding fails\linebreak
either operational correspondence
(i.e.~$\encode {\iap a x.\zero\bnf \oap a {b_1,b_2}}\redar$)
or divergence reflection as in sub-case (2) of Theorem~\ref{thm:no_int_2_join} except here with
arity instead of number of names.
\end{proof}

Thus joining does not allow for encoding channels in a dataspace-based language unless it could already be
encoded by some other means.
In the other direction, the inability to encode joining into a binary language is already 
ensured by Corollary~\ref{cor:join_gt_bin}.

\section{Joining and Pattern-Matching}
\label{sec:join_pattern}

This section considers the relations between joining and pattern-matching.
The great expressive power of name matching \cite{G:IC08} and intensionality \cite{givenwilson:hal-01026301}
prove impossible to encode with joining.
In the other direction, joining cannot be encoded by any form of pattern-matching.

The first result is to prove that intensionality cannot be encoded by joining.
Recall that since intensionality alone can encode all other features aside from joining,
it is sufficient to consider $\Lang_{A,M,D,I,B}$.

\begin{theorem}
\label{thm:no_int_2_join}
There exists no valid encoding from $\Lang_{A,M,D,I,B}$ into $\Lang_{-,-,-,\delta,J}$ where $\delta\neq I$.
\end{theorem}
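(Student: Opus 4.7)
The plan is to proceed by contradiction, assuming a valid encoding $\encode\cdot$ from $\Lang_{A,M,D,I,B}$ into some $\Lang_{-,-,-,\delta,J}$ with $\delta \neq I$. The strategy adapts the technique already used in \cite{givenwilson:hal-01026301} to rule out encoding intensionality into non-intensional binary languages, extended to account for the presence of joining in the target.

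First, I would choose source processes that exercise the structural discrimination afforded by intensional matching. Let $P = \iap{}{x \bullet y}.\ok$ and, for each source term $t$, let $O_t = \oap{}{t}$. By the match rule for compounds, $P \bnf O_t \redar \ok$ exactly when $t$ has the form $s_1 \bullet s_2$, and $P \bnf O_t \not\redar$ when $t$ is a single name. Validity then forces $\encode{P \bnf O_{a \bullet b}} \redar$ and $\encode{P \bnf O_{a \bullet b}} \suc$, while by Proposition~\ref{prop:deadlock} $\encode{P \bnf O_c} \not\redar$ for every name $c$.

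Second, I would analyse the shape of the reduction $\encode{P \bnf O_{a \bullet b}} \redar$. By compositionality and repeated application of Proposition~\ref{prop:ctx_top}, both $\encode P$ and $\encode{O_{a \bullet b}}$ appear at the top level of a parallel context. The reduction must consist of a number of outputs interacting with a single join whose input patterns are each either a binding name or a name-match, since $\delta \neq I$. Proceeding as in Theorems~\ref{thm:no_poly_2_join} and \ref{thm:no_chan_2_join}, considering parallel copies of $P$ and of $O_{a \bullet b}$ rules out degenerate placements of the participating outputs, and forces the join to originate from $\encode P$ with at least some of the interacting outputs coming from $\encode{O_{a \bullet b}}$.

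Third, the crux of the argument is to construct a source process built only from single-name outputs whose encoding nevertheless suffices to fire the join inside $\encode P$. The idea is to use name invariance with a renaming that aligns the free names of $\encode{O_c}$ with those expected by the join, and then take enough parallel copies to supply the required arity. Since each input pattern inside the join can only inspect names, no such pattern can distinguish a ``fragment'' of an encoded compound from an encoded single name once the relevant target names are in place. This yields a process $R$, composed solely of outputs $O_{c_i}$ of single names, for which $R \bnf P \not\redar$ in the source (no constituent of $R$ delivers a compound), yet $\encode{R \bnf P} \redar$ in the target, contradicting Proposition~\ref{prop:deadlock}.

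The main obstacle is controlling the encoding's considerable freedom in representing a single source compound by several target names and auxiliary processes distributed across $\renpol(a)$, $\renpol(b)$, and fresh restricted names introduced by the translation. The argument must track these representations carefully through $\renpol$ and name invariance, and must additionally handle the case $\delta = \mathit{NM}$ by arguing, in the style of the closing step of Theorem~\ref{thm:no_synch_2_asynch}, that any name required by a name-match in the join can be produced by suitably renaming a single-name source output, so that even name-matching joins cannot recover the structural discrimination lost by the absence of intensional patterns.
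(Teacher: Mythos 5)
Your setup (argue by contradiction, use compositionality and Proposition~\ref{prop:ctx_top} to place the encoded processes at top level, and observe that the target reduction consists of finitely many non-intensional outputs meeting a single join) matches the paper's. But the discriminating scenario you chose, and the step that is supposed to close the argument, do not work. You test whether the target can distinguish a compound output $\oap{}{a\bullet b}$ from a single-name output $\oap{}{c}$, and you claim that no non-intensional pattern can distinguish a ``fragment'' of an encoded compound from an encoded single name once the relevant target names are in place. That claim is false: distinguishing ``compound'' from ``atomic'' is one bit of information, and the encoding is free to tag it, e.g.\ by emitting a reserved free name (or using a reserved channel) in $\encode{\oap{}{a\bullet b}}$ that never occurs in any $\encode{\oap{}{c}}$; a name-match $\pro{\mathsf{tag}}$ or a channel position inside the join of $\encode{P}$ then blocks every single-name output. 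Your proposed repair via name invariance cannot manufacture that tag: source substitutions map names to terms, so they can build compounds out of names but never collapse a compound into a name, and $\renpol$ only lets you permute names associated to source names --- it gives you no process $R$ built solely from single-name outputs whose encoding is guaranteed to supply the names the join demands. So the contradiction ``$R\bnf P\not\redar$ yet $\encode{R\bnf P}\redar$'' is never actually derived, and this gap also infects your closing treatment of the $\delta=\mathit{NM}$ case.

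The paper's proof avoids this by exploiting the genuinely unbounded capability of intensionality rather than the compound/name dichotomy. It first fixes the combined arity $k$ of the join involved in the target reduction for a reference pair $S_0\bnf S_1$, and then plays $S_2=\oap{}{a_1\bullet\ldots\bullet a_{2k+1}}$ against $S_3=\iap{}{\pro{a_1}\bullet\ldots\bullet\pro{a_{2k+1}}}.\oap{}{m}$: a single source interaction that atomically tests $2k+1$ name equalities. A non-intensional join of combined arity $k$ can examine at most $2k$ names (channel positions plus name-matches), so by pigeonhole some $a_i$ goes unexamined in the target reduction; swapping that $a_i$ with $m$ yields $S_4$ with $S_2\bnf S_4\not\redar$, while $\encode{S_3}$ and $\encode{S_4}$ differ only in a position the reduction never inspects, and the substitution $\{m/a_i,a_i/m\}$ (together with the subcases for mismatched arities, handled by roll-back and divergence arguments) yields the contradiction. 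To salvage your write-up you would need to replace your second and third steps with a counting argument of this kind; the compound-versus-name test alone cannot defeat a non-intensional target.
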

\begin{proof}
The proof is by contradiction and similar to Theorem~7.1 of \cite{givenwilson:hal-01026301}.
Assume there exists a valid encoding $\xtrans\cdot$ from $\Lang_{A,M,D,I,B}$ into
$\Lang_{\alpha,\beta,\gamma,\delta,J}$ for some $\alpha$ and $\beta$ and $\gamma$ and $\delta$
where $\delta\neq I$.
Consider the encoding of the processes $S_0=\join{\iap {} x} \oap {} m$ and $S_1=\oap {} a$.
Clearly $\xtrans{S_0 \bnf S_1}\redar$  since $S_0\bnf S_1\redar$.
There exists a reduction $\xtrans{S_0\bnf S_1}\redar$ that must be between a
join and some outputs that have combined maximal arity $k$.
(The combined arity is the sum of the arities of all the input-patterns of the join involved,
e.g.~$\join{\iap {} {a,b}\bnf\iap {} {c}} \zero$ has combined arity 3.)

Now define the following processes 
$S_2 \define \oap {} {a_1\bullet\ldots\bullet a_{2k+1}}$
and
$S_3 \define \iap {} {\pro{a_1}\bullet\ldots\bullet\pro{a_{2k+1}}} . \oap {} m$
where $S_2$ outputs $2k+1$ distinct names in a single term, and $S_3$ matches all of these names in
a single intensional pattern.
Since $S_2\bnf S_0\redar$ it must be that $\xtrans{S_2\bnf S_0}\redar$
for the encoding to be valid.
Now consider the maximal combined arity of the reduction $\xtrans{S_2\bnf S_0}\redar$.
\begin{itemize}
\item If the arity is $k$ 
  consider the reduction $\xtrans{S_2\bnf S_3}\redar$ with the combined maximal arity $j$ which
  must exist since $S_2\bnf S_3\redar$.
  Now consider the relationship of $j$ and $k$.
  \begin{enumerate}
  \item If $j=k$ 
    then the upper bound on the number of names that are matched in the
    reduction is $2k$ (when each name is matched via a distinct channel).
    Since not all $2k+1$ tuples of names from $\renpol (a_i)$ can be matched in the reduction then
    there must be at least one tuple $\renpol (a_i)$ for $i\in\{1,\ldots,2k+1\}$ that is not being
    matched in the interaction $\xtrans{S_2 \bnf S_3}\redar $.
    Now construct $S_4$ that differs from $S_3$ only by swapping one such name $a_i$ with
    $m$: $
    S_4 \define \iap {} {\pro{a_1}\bullet\ldots\pro{a_{i-1}}\bullet\pro m
    \bullet\pro{a_{i+1}}\ldots\pro{a_{k+2}}}. \oap {} {a_i}$.
    Now consider the context $\context {C^N_|} {\encode{S_2},\encode\cdot}=\encode{S_2\bnf \cdot}$
    where $N=\{\wt a\cup m\}$.
    Clearly neither $\context {C^N_|} {\encode{S_2},\encode{\zero}}\redar$ nor
    $\context {C^N_|} {\encode{S_2},\encode{S_4}}\redar$ as this would contradict
    Proposition~\ref{prop:deadlock}.
    However, since $S_3$ and $S_4$ differ only by the position of one name whose tuple $\renpol(\cdot)$
    does not appear in the
    reduction $\encode{S_2\bnf S_3}\redar$, it follows that the reason 
    $\context {C^N_|} {\encode{S_2},\encode{S_4}}\not\redar$ must be due to a structural
    congruence difference between $\context {C^N_|} {\encode{S_2},\encode{S_3}}$
    and $\context {C^N_|} {\encode{S_2},\encode{S_4}}$.
    Further, by compositionality of the encoding the difference can only be between
    $\encode{S_3}$ and $\encode{S_4}$.
    Since Proposition~\ref{prop:deadlock} ensures that $\encode{S_3}\not\redar$
    and $\encode{S_4}\not\redar$, the only possibility is a structural difference
    between $\encode{S_3}$ and $\encode {S_4}$.
    Now exploiting 
    $\sigma=\{m/a_i,a_i/m\}$
    such that $\sigma S_4=S_3$
    yields contradiction.
%
%
  \item If $j\neq k$ then obtain that \encode{$S_2}$ must be able to interact with both
    combined arity $k$ and combined arity $j$.
    That is, $\xtrans{S_2\bnf\cdot}=\context{C^N_|}{\xtrans {S_2},\xtrans\cdot}$
    where $N=\{\wt a\cup m\}$ and that
    $\context{C^N_|}{\xtrans {S_2},\xtrans{S_0}}$ reduces with combined arity $k$ and
    $\context{C^N_|}{\xtrans {S_2},\xtrans{S_3}}$ reduces with combined arity $j$.
    Now it is straightforward, if tedious, to show that since $S_0\bnf S_3\not\redar$
    that $\context{C^N_|}{\xtrans {S_2},\xtrans{S_0\bnf S_3}}$ can perform the same initial
    reductions as either
    $\context{C^N_|}{\xtrans {S_2},\xtrans{S_0\bnf \zero}}$ or 
    $\context{C^N_|}{\xtrans {S_2},\xtrans{\zero\bnf S_3}}$
    by exploiting operational correspondence and Proposition~\ref{prop:deadlock}.
    Thus, it can be shown that $\context{C^N_|}{\xtrans {S_2},\xtrans{S_0\bnf S_3}}$ can perform both
    the $k$ combined arity reduction of $\xtrans{S_2\bnf S_0}\redar$ and
    the $j$ combined arity reduction of $\xtrans{S_2\bnf S_3}\redar$.
    Now by exploiting the structural congruence rules it follows that neither of these
    initial reductions can prevent the other occurring.
    Thus,
    $\context{C^N_|}{\xtrans {S_2},\xtrans{S_0\bnf S_3}}$ must be able to do both of these 
    initial reductions in any order.
    Now consider the process $R$ that has performed both of these initial reductions.
    By operational correspondence it must be that $R\not\Redar\beq\encode{\oap {} m\bnf \oap {} m}$ since
    $S_2\bnf S_0\bnf S_3\not\Redar \oap {} m\bnf \oap {} m$.
    Therefore, $R$ must be able to roll-back the initial step with combined arity $j$;
    i.e~reduce to a state that is equivalent to the reduction not occurring.
    (Or the initial step with arity $k$, but either one is sufficient as
    by operational correspondence $R\Redar\beq\encode{\oap {} m\bnf S_3}$.)
    Now consider how many names are being matched in the initial reduction with combined arity $j$.
    If $j< k$ the technique of differing on one name used in the case of $j=k$ can be used
    to show that this would introduce divergence on the potential roll-back and thus contradict
    a valid encoding.
    Therefore it must be that $j > k$.
    Finally, by exploiting name invariance and substitutions like
    $\{(b_1\bullet\ldots\bullet b_{j})/a_1\}$ applied to $S_2$ and $S_3$
    it follows that either $j> k + j$ or both $S_2$ and $S_3$ must have
    infinitely many initial reductions which yields divergence.
  \end{enumerate}
\item If the combined arity is not $k$ then proceed like the second case above.
\end{itemize}
\vspace{-0.6cm}
\end{proof}

\begin{corollary}
If there exists no valid encoding from
$\Lang_{\alpha,\beta,\gamma,I,B}$ into $\Lang_{\alpha,\beta,\gamma,\delta,B}$, then there exists
no valid encoding from $\Lang_{\alpha,\beta,\gamma,I,-}$ into $\Lang_{\alpha,\beta,\gamma,\delta,J}$.
\end{corollary}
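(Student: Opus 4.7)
The plan is to obtain the corollary via composition with Theorem~\ref{thm:no_int_2_join}, so that the full witness machinery of that theorem does not have to be reconstructed in the general setting.

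First I would dispose of the degenerate case $\delta = I$: there the identity is a valid encoding from $\Lang_{\alpha,\beta,\gamma,I,B}$ into itself, so the hypothesis of the corollary fails and the statement is vacuously true. For $\delta \in \{\mathit{NO},\mathit{NM}\}$ I proceed by contradiction and suppose there is a valid encoding $\xtrans\cdot$ from $\Lang_{\alpha,\beta,\gamma,I,\epsilon}$ into $\Lang_{\alpha,\beta,\gamma,\delta,J}$ for either value of the source coordination $\epsilon \in \{B,J\}$.

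By Remark~\ref{rem:leq}, the inclusions $A \leq \alpha$, $M \leq \beta$, $D \leq \gamma$, and $B \leq \epsilon$ give a trivial valid encoding $\mathcal{T}$ from $\Lang_{A,M,D,I,B}$ into $\Lang_{\alpha,\beta,\gamma,I,\epsilon}$. Composing $\xtrans\cdot$ after $\mathcal{T}$ yields a valid encoding from $\Lang_{A,M,D,I,B}$ into $\Lang_{\alpha,\beta,\gamma,\delta,J}$. Since the target is an instance of $\Lang_{-,-,-,\delta,J}$ and $\delta \neq I$, this directly contradicts Theorem~\ref{thm:no_int_2_join}, closing the argument.

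The one routine step is verifying that the composition of two valid encodings is itself valid: the contexts from compositionality of the two encodings can be nested to give the composite context; the renaming policies compose by flattening $\renpol_2 \circ \renpol_1$ into a single tuple-valued map; and operational correspondence, divergence reflection, and success sensitiveness transfer transparently through the chain. I expect no obstacle of substance here. Note that the hypothesis of the corollary is present chiefly for parallel structure with the preceding corollaries and is automatically satisfied for $\delta \neq I$ by the binary results of \cite{givenwilson:hal-01026301}; the proof itself never needs to invoke it directly, since Theorem~\ref{thm:no_int_2_join} already does all of the genuine work about intensional pattern-matching being inexpressible in a joining $\delta$-language.
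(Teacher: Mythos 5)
Your proof is correct and takes essentially the approach the paper intends: the corollary is stated without an explicit proof, but the remark preceding Theorem~\ref{thm:no_int_2_join} that ``it is sufficient to consider $\Lang_{A,M,D,I,B}$'' signals exactly your reduction, namely composing the trivial embedding supplied by Remark~\ref{rem:leq} with any hypothetical encoding and contradicting Theorem~\ref{thm:no_int_2_join}, whose target already ranges over every $\Lang_{-,-,-,\delta,J}$ with $\delta\neq I$. Your dispatch of the vacuous case $\delta=I$ and your observation that the stated hypothesis is not actually invoked (being automatically satisfied for $\delta\neq I$ by the binary separation results) are both accurate.
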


It follows that joining cannot represent intensionality in a language that does not have intensionality
already (including name-matching or no-matching languages).

The next result shows that name matching is insufficient to encode joining.

\begin{theorem}
\label{thm:no_name_2_join}
There exists no valid encoding from $\Lang_{A,M,D,\mathit{NM},B}$ into
$\Lang_{\alpha,\beta,\gamma,\mathit{NO},J}$.
\end{theorem}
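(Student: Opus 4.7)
The plan is to argue by contradiction, in the style of Theorems~\ref{thm:no_synch_2_asynch} and~\ref{thm:no_chan_2_join}. Assume a valid encoding $\xtrans\cdot$ from $\Lang_{A,M,D,\mathit{NM},B}$ into $\Lang_{\alpha,\beta,\gamma,\mathit{NO},J}$. The source witnesses are $P = \iap{}{\pro a}.\ok$, $O_a = \oap{}{a}$, $O_b = \oap{}{b}$ for distinct names $a \neq b$, together with the generic binder $R = \iap{}{x}.\ok$. The source facts driving the argument are: $P \bnf O_a \redar \ok$, so $P \bnf O_a \suc$; $P \bnf O_b$ is stuck and not successful; $R \bnf O_c \redar \ok$ for every $c$; and the only reduction of $P \bnf O_a \bnf O_b$ consumes $O_a$ via the name-match $\pro a$, leaving $\ok \bnf O_b$, since $P$ refuses $b$.

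First, I would use $\xtrans{R} \bnf \xtrans{O_c} \suc$ for every $c$ together with name invariance: because $R$ has no free source names, $\xtrans{R}$ carries no reference to any $\renpol(c)$, so its interaction with $\xtrans{O_c}$ must be mediated by channels or arities that are fixed by the encoding rather than by $c$. Symmetrically, every $\xtrans{O_c}$ must expose outputs on those fixed interfaces. Second, I would analyse the reduction $\xtrans{P \bnf O_a} \redar$: by Proposition~\ref{prop:deadlock} it involves a top-level join $J$ contributed by $\xtrans{P}$ consuming a multiset of outputs supplied in part by $\xtrans{O_a}$, and since the target has no name-matching every input pattern of $J$ is purely binding, so $J$'s ability to fire depends only on channels and arities of the available outputs, never on their data.

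The argument then case-splits on the channels of $J$. The dataspace sub-case ($\gamma=D$) is immediate: matching depends only on arities, and by name invariance $\xtrans{O_a}$ and $\xtrans{O_b}$ have isomorphic arity profiles, so $J$ would also fire in $\xtrans{P\bnf O_b}$, contradicting $P\bnf O_b\not\redar$. The channel-based sub-case ($\gamma=C$) splits further. If every input of $J$ is on a fixed channel, $\xtrans{P\bnf O_b}$ reduces for the same reason as the dataspace case. If every input of $J$ is on a $\renpol(a)$-derived channel, the fixed-channel outputs of $\xtrans{O_a}$ supplied to satisfy $\xtrans{R}$ are never consumed by $J$, so they persist as a residue absent from $\xtrans{\ok}$, and operational correspondence together with success sensitiveness forces $\xtrans{P}$ to contain fixed-channel consumers that return the analysis to the previous sub-case.

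The remaining sub-case is when $J$ mixes fixed and $\renpol(a)$-derived inputs, and this cross-matching phenomenon is the main obstacle and the crux of the proof. Considering $\xtrans{P\bnf O_a\bnf O_b}$, the join $J$ can legitimately fire by having its fixed-channel inputs satisfied by $\xtrans{O_b}$ while its $\renpol(a)$-derived inputs are satisfied by $\xtrans{O_a}$, since no-matching blinds $J$ to which encoded output supplies each contribution. The resulting residue contains fragments of both $\xtrans{O_a}$ and $\xtrans{O_b}$ and is not $\beq$-equivalent to $\xtrans{S'}$ for any source state $S'$ reachable from $P\bnf O_a\bnf O_b$, namely $P\bnf O_a\bnf O_b$ itself or $\ok\bnf O_b$, contradicting operational correspondence. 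Every sub-case thus yields a contradiction, and no such encoding exists.
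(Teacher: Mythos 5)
Your overall strategy coincides with the paper's: argue by contradiction, exploit the injective swap of $a$ and $b$ together with name invariance to relate $\xtrans{O_a}$ and $\xtrans{O_b}$, split on whether the target is dataspace-based or channel-based, use a generic binder to force the encoded outputs to expose a fixed interface, and locate the crux in the fact that a no-matching join cannot control which encoded output satisfies which of its input patterns. The dataspace case and the first two channel sub-cases are in line with the paper's treatment (the paper phrases the swap as a substitution $\sigma=\{a/b,b/a\}$ applied to one side of a single configuration, but the content is the same).

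The gap is in the endgame of the mixed sub-case. Operational correspondence does not require the cross-matched state $T$ to be $\beq$-inequivalent to every $\xtrans{S'}$; it requires that $T\Redar\beq\xtrans{S'}$ fail for every reachable $S'$. Since the target language retains the conditional $\ifte s t P Q$, the continuation of the join can inspect the names it has received, detect that the fixed-channel data is $\renpol(b)$ rather than $\renpol(a)$, and re-emit everything it consumed, restoring a state equivalent to $\xtrans{P\bnf O_a\bnf O_b}$; your residue argument does not exclude this roll-back. Closing that escape is exactly why the paper instruments the configuration with a divergence observer: with its witnesses $P=\oap{}{a}$ and $Q=\iap{}{\pro a}.(\oap{}{b}\bnf\ok)$ it adds $S=\iap{}{x}.\ift x a \Omega$ and analyses $P\bnf\sigma P\bnf S\bnf Q$, so that cross-matching misroutes an output to $\encode S$ and yields divergence or a wrong success outcome, contradicting divergence reflection and success sensitiveness rather than an unproved inequivalence of residues. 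You need either this instrumentation or an explicit argument (for instance, that repeated roll-back of the cross-match itself yields divergence) to finish the mixed case.
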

\begin{proof}
The proof is by contradiction, assume there exists a valid encoding $\encode\cdot$.
Consider the $\Lang_{A,M,D,\mathit{NM},B}$ processes $P=\oap {} {a}$ and
$Q=\iap {} {\pro a} .(\oap {} b \bnf \ok)$.
Clearly it holds that $P\bnf Q\redar$ and $P\bnf Q\suc$ and so
$\encode{P\bnf Q}\redar$ and $\encode{P\bnf Q}\suc$ by validity of the encoding.
Now consider $\gamma$.
\begin{itemize}
\item If $\gamma=D$ then consider 
the substitution $\sigma=\{a/b,b/a\}$, it is clear that $P\bnf \sigma Q\not\redar$
and so $\encode{P\bnf \sigma Q}\not\redar$, however the only possibility that this holds is
when $\encode {\sigma Q}$ is blocked from interacting.
It is then straightforward if tedious to show that any such blocking of reduction would either
imply $\encode {\sigma(P\bnf Q)}\not\redar$ or $\sigma(P\bnf Q)\not\redar$
and thus contradict the validity of the encoding.

%
\item Otherwise it must be that $\gamma=C$. Now consider the reduction $\encode{P\bnf Q}\redar$
that must be of the form
$\oap {c_1} {\wt {m_1}} \bnf \ldots \bnf \oap {c_i} {\wt {m_i}} \bnf
\join {\iap {c_1} {\wt {z_1}}\bnf\ldots\bnf\iap {c_i} {\wt {z_i}}} T_1$
for some $\wt c$ and $\wt m$ and $\wt z$ and $i$ and $T_1$.
Again consider the substitution $\sigma=\{a/b,b/a\}$, it is clear that $\sigma P\bnf Q\not\redar$
and so $\encode{\sigma P\bnf Q}\not\redar$. The only way this can occur without contradicting
the validity of the encoding (as in the previous case) is when there is at least one $c_k$ in the
domain of some $\sigma '$ where $\sigma '(c_k)\neq c_k$ and $\encode{\sigma P}\beq \sigma'\encode{P}$
by definition of the encoding.
Now consider the process $S=\iap {} x.S'$,
clearly $P\bnf S\redar$ and so $\encode{P\bnf S}\redar$ as well.
The reduction $\encode{P\bnf S}\redar$ must be from the form
$\oap {d_1} {\wt {n_1}} \bnf \ldots \bnf \oap {d_j} {\wt {n_j}} \bnf
\join {\iap {d_1} {\wt {w_1}}\bnf\ldots\bnf\iap {d_j} {\wt {w_j}}} T_2$
for some $\wt d$ and $\wt n$ and $\wt w$ and $j$ and $T_2$.
Now if $i=j$ it follows that for each $k\in\{1\ldots i\}$ then $c_k=d_k$.
However, this contradicts the validity of the encoding since there is some $c_k$ in the domain of $\sigma '$
such that $\sigma '(c_k)\neq c_k$ and $\sigma P\bnf S\redar$ while $\encode {\sigma P\bnf S}\not\redar$.
Otherwise it must be that $i>j$ (otherwise if $i<j$ then $\encode {P\bnf S}\not\redar$)
and that $c_k\in\{c_{j+1},\ldots,c_i\}$.
Now consider when $S'=\ift x a \Omega$, 
clearly $P\bnf S\redar\equiv\Omega$ and $\sigma P\bnf S\redar\equiv \zero$ and so
$\encode{P\bnf S}$ diverges and $\encode{\sigma P\bnf S}\Redar\beq \zero$.
Now it can be shown that $P\bnf \sigma P\bnf S\bnf Q\redar\redar \equiv\ok$ while
$\encode{P\bnf \sigma P\bnf S\bnf Q}\suc$ and diverges since $\encode {\sigma P}$
can satisfy the first $j$ input patterns of $\encode{Q}$ and $\encode{\sigma P}$ the remaining
$i-j$, leaving the first $j$ input patterns of $\encode {P}$ to interact with $\encode S$
and yield divergence.
The only other possibility is that $\encode{P\bnf \sigma P\bnf S\bnf Q}\not\suc$.
However, this requires that $T_1$ check some binding name in $\wt z$ for equality with
$a$ before yielding success (i.e.~$\ift {z_1} a \ok$). This can in turn be shown
to contradict the validity of the encoding by adding another instance of $P$.
\end{itemize}
\vspace*{-0.6cm}
\end{proof}

\begin{corollary}
If there exists no valid encoding from
$\Lang_{\alpha,\beta,\gamma,\mathit{NM},B}$ into $\Lang_{\alpha,\beta,\gamma,\delta,B}$, then there exists
no valid encoding from $\Lang_{\alpha,\beta,\gamma,\mathit{NM},-}$ into $\Lang_{\alpha,\beta,\gamma,\delta,J}$.
\end{corollary}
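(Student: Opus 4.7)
The plan is to reduce the statement to instances covered by Theorem~\ref{thm:no_name_2_join} and then to adapt its proof to the remaining feature combinations. First, $\Lang_{\alpha,\beta,\gamma,\mathit{NM},B}$ embeds trivially into $\Lang_{\alpha,\beta,\gamma,\mathit{NM},J}$ by Remark~\ref{rem:leq}, so any valid encoding of the joining source $\Lang_{\alpha,\beta,\gamma,\mathit{NM},J}$ into $\Lang_{\alpha,\beta,\gamma,\delta,J}$ would compose with this embedding to produce a valid encoding of $\Lang_{\alpha,\beta,\gamma,\mathit{NM},B}$ into $\Lang_{\alpha,\beta,\gamma,\delta,J}$. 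Hence it suffices to rule out an encoding from the binary name-matching source into the joining target. Observe also that the hypothesis forces $\delta=\mathit{NO}$, because for $\delta\in\{\mathit{NM},I\}$ a valid binary-to-binary encoding is supplied directly by Remark~\ref{rem:leq}, contradicting the hypothesis.

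For $(\alpha,\beta,\gamma)=(A,M,D)$ the conclusion is exactly Theorem~\ref{thm:no_name_2_join}. For the other feature combinations I would redo that proof with witness processes chosen to fit the source language while preserving the crucial asymmetry: under the name-swap $\sigma=\{a/b,b/a\}$ one has $P\bnf Q\redar$ and $P\bnf Q\suc$ but $\sigma P\bnf Q\not\redar$. In the polyadic cases, wrap the transmitted name in a singleton tuple and combine with the arity-fixing argument used in the corollary of Theorem~\ref{thm:no_chan_2_join} so that the target cannot simulate matching by shuffling tuple positions. In the channel-based cases, replace the dataspace primitives by their channel-based counterparts on a fixed channel; the $\gamma=C$ branch of the proof of Theorem~\ref{thm:no_name_2_join} was already written for channels and applies almost verbatim. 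In synchronous cases, append a null continuation to $P$ and propagate the $\oap {} b\bnf\ok$ continuation through $Q$, using the techniques from Theorem~\ref{thm:no_synch_2_asynch} to ensure the divergence and success discriminators still fire under the added sequential structure.

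The hardest step will be checking, uniformly across these adaptations, that the target cannot exploit its remaining features---synchronous continuations, channels, or tuple components---as a surrogate for the missing name-matching discrimination between $a$ and $b$. The channel-based bullet in the proof of Theorem~\ref{thm:no_name_2_join} already performs the delicate book-keeping over the channels of a target join together with the divergence witness $\ift x a \Omega$; what is required here is to verify that this book-keeping survives when the source is polyadic or synchronous, and in particular that the divergence witness remains expressible and that the extra sequential or arity structure does not open up new contexts in which the contradiction collapses. Once these checks go through, the contradiction is obtained uniformly from Proposition~\ref{prop:deadlock}, operational correspondence, divergence reflection, and success sensitiveness, exactly as in Theorem~\ref{thm:no_name_2_join}.
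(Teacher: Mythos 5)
Your logical skeleton is sound: the vacuity observation for $\delta\in\{\mathit{NM},I\}$ and the reduction of the joining source to the binary source via Remark~\ref{rem:leq} plus composition of valid encodings are exactly right. However, you stop applying that trick too early. The paper gives no explicit proof of this corollary, and the reason is that the same composition argument also disposes of $\alpha$, $\beta$ and $\gamma$: Theorem~\ref{thm:no_name_2_join} already quantifies over an \emph{arbitrary} target $\Lang_{\alpha,\beta,\gamma,\mathit{NO},J}$, and its source $\Lang_{A,M,D,\mathit{NM},B}$ is minimal among all binary name-matching languages, so by Remark~\ref{rem:leq} it embeds into every $\Lang_{\alpha,\beta,\gamma,\mathit{NM},-}$. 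A hypothetical valid encoding of $\Lang_{\alpha,\beta,\gamma,\mathit{NM},-}$ into $\Lang_{\alpha,\beta,\gamma,\mathit{NO},J}$ would therefore compose with that embedding to yield a valid encoding of $\Lang_{A,M,D,\mathit{NM},B}$ into $\Lang_{\alpha,\beta,\gamma,\mathit{NO},J}$, contradicting the theorem outright --- no new witness processes are needed. Your second and third paragraphs, which propose re-running the separation proof with polyadic, channel-based and synchronous witnesses, are consequently unnecessary, and they are also where the proposal is weakest: you explicitly defer ``the hardest step'' (verifying that the book-keeping over target joins survives the added arity and sequential structure) without carrying it out, so as written that case analysis is a sketch rather than a proof. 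If you keep your route you must actually perform those checks; the shorter route makes them moot. Both routes rely on composability of valid encodings, which the paper uses freely elsewhere (e.g.\ in the corollary to Theorem~\ref{thm:synch}), so that reliance is acceptable here.
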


Thus joining does not allow for encoding name-matching into a no-matching language unless it could already be
encoded by some other means.
In the other direction, the inability to encode joining into a binary language is already 
ensured by Corollary~\ref{cor:join_gt_bin}.

\section{Conclusions and Future Work}
\label{sec:conclude}

Languages with non-binary coordination have been considered before, although
less often than binary languages.
It turns out that increases in coordination degree correspond to increases in expressive power.
For example, an intensional binary language
cannot be encoded by a non-intensional joining language.
However, encodings from lower coordination degree languages into higher coordination degree
languages are still dependent upon other features.

This formalises that languages like the Join Calculus, general rendezvous calculus, and m-calculus
cannot be validly encoded into binary languages, regardless of other features.
Although there exist encodings from (for example) Join Calculus into $\pi$-calculus
\cite{Fournet_thereflexive}
these do not meet the criteria for a {\em valid encoding} used here.
The general approach used in such encodings is to encode joins by
$\encode{\join {\iap m x \bnf \iap n y} P}=\iap m x .\iap n y.\encode{P}$,
however this can easily fail operational correspondence, divergence reflection, or success sensitivity.
For example
${\join{\iap {c_1} {w}\bnf \iap {c_2} {x}}\ok \bnf \join {\iap {c_2} {y} \bnf \iap {c_1} {z}}\Omega
\bnf \oap {c_1} a\bnf \oap {c_2} b}$ will either report success or diverge, but its encoding can deadlock.
Even ordering the channel names to prevent this can be shown to fail under substitutions.

Future work along this line can consider coordination not merely to be binary or joining.
Indeed, a {\em splitting} language could be one where several output terms can be combined into
a split $\join {\oap m a\bnf \oap n b} P$ while inputs remain of the form $\iap m x .Q$.
Further, languages could support both joining and splitting primitives for full coordination.

\paragraph{Acknowledgments.} We would like to thank the reviewers for their constructive and helpful criticism.

\bibliographystyle{eptcs}
\bibliography{local}

\end{document}